\numberwithin{equation}{section}
\newtheorem{theorem}{Theorem}[section]
\newtheorem{lemma}[theorem]{Lemma}
\newtheorem{corollary}[theorem]{Corollary}
\newcommand\algorithmicinput{\textbf{Input:}}
\newcommand\INPUT{\item[\algorithmicinput]}
\begin{document}

\title{A linear time algorithm for the $3$-neighbour Travelling Salesman Problem on a Halin graph and extensions}%a\tnoteref{t1}}

%\notetext[t1]{This work was supported by NSERC Discovery Grants awarded to Abraham P. Punnen and Tamon Stephen}

\author[SFU]{Brad Woods}%\corref{cor1}}
%\ead{bdw2@sfu.ca}

\author[SFU]{Abraham Punnen}
%\ead{apunnen@sfu.ca}

\author[SFU]{Tamon Stephen}
%\ead{tamon@sfu.ca}

%\cortext[cor1]{Corresponding author}

\address[SFU]{Department of Mathematics, Simon Fraser University, 250-13450 102 Avenue, Surrey, British Columbia, V3T 0A3, Canada}

\begin{abstract}
The Quadratic Travelling Salesman Problem (QTSP) is to find a least cost Hamiltonian cycle in an edge-weighted graph, where costs are defined for all pairs of edges contained in the Hamiltonian cycle. The problem is shown to be strongly NP-hard on a Halin graph. We also consider a variation of the QTSP, called the $k$-neighbour TSP (TSP($k$)).   Two edges $e$ and $f$, $e\neq f$, are  \emph{$k$-neighbours} on a tour $\tau$ if and only if a shortest path (with respect to the number of edges)   between $e$ and $f$ along $\tau$ and containing both $e$ and $f$, has exactly $k$ edges, for $k\geq 2$.  In (TSP($k$)), a fixed nonzero  cost is considered for a pair of distinct edges in the  cost of a tour $\tau$ only when the edges are  $p$-neighbours on  $\tau$ for $2\leq p\leq k$.  We give a linear time algorithm to solve TSP($k$) on a Halin graph for $k = 3$, extending existing algorithms for the cases $k=1,2$.  Our algorithm can be extended further to solve TSP($k$) in polynomial time on a Halin graph with $n$ nodes when $k=O(\log n)$. The possibility of extending our results to some fully reducible class of graphs is also discussed.  TSP($k$) can be used to model the Permuted Variable Length Markov Model in bioinformatics as well as an optimal routing problem for unmanned aerial vehicles (UAVs).
\end{abstract}

\maketitle

\section{Introduction}

The Travelling Salesman Problem (TSP) is to find a least cost Hamiltonian cycle in an edge weighted graph.  It is one of the most widely studied combinatorial optimization problems and is well-known to be NP-hard.  The TSP model has been used in a wide variety of applications.  For details we refer the reader to the well-known books~\cite{Applegate2011,Cook2012, Gutin2002-2,Lawler1985,  Reinelt1994} as well as the papers~\cite{Balas2006,Ergun2006,Larusic2014,Larusic2012}.

For some applications, more than linear combinations of distances between consecutive nodes are desirable in formulating an objective function.  Consider the problem of determining an optimal routing of an unmanned aerial vehicle (UAV) which has a list of targets at specific locations.  This can be modelled as a TSP which requires a tour that minimizes the distance travelled.  However, such a model neglects to take into account the physical limitations of the vehicle, such as turn radius or momentum.  To illustrate this idea, in Figure~\ref{fig:figure1} we give a Hamiltonian path, in Figure~\ref{fig:figure2} we give the corresponding flight path, and Figure~\ref{fig:figure3} shows a route which is longer but can be travelled at a greater speed and hence reducing the overall travel time.  To model the traversal time, we can introduce penalties for pairs of (not necessarily adjacent) edges to force a smooth curve for its traversal.  In this paper we consider a generalization of the TSP which can be used to model similar situations and contains many variations of the TSP, such as the angular-metric TSP~\cite{Aggarwal2000} and the TSP~\cite{Balas95} as special cases.

\begin{figure}[ht]
	\centering

\begin{minipage}[b]{0.27\linewidth}
\centering
\includegraphics[page=1,clip,trim=260 610 255 85,width=\textwidth]{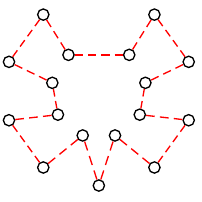}
\caption{Optimal TSP tour with respect to length.}
\label{fig:figure1}
\end{minipage}
\begin{minipage}[b]{0.27\linewidth}
\centering
\includegraphics[page=1,clip,trim=260 450 255 245,width=\textwidth]{figures2.pdf}
\caption{Smoothing of the tour.}
\label{fig:figure2}
\end{minipage}
\begin{minipage}[b]{0.27\linewidth}
\centering
\includegraphics[page=1,clip,trim=260 285 255 405,width=\textwidth]{figures2.pdf}
\caption{A tour which can be travelled more quickly.}
\label{fig:figure3}
\end{minipage}
\end{figure}

Let $G=(V,E)$ be an undirected graph on the node set $V=\{0,1,\ldots ,n-1\}$ with the convention that all indices used hereafter are taken modulo $n$. For each edge $(i,j) \in E$ a nonnegative cost $c_{ij}$ is given.  Let $\mathcal{F}$ be the set of all tours (Hamiltonian cycles) in $G$ and let $\tau=(v_0,v_1,\ldots ,v_{n-1},v_0)\in \mathcal{F}$. 
Note that a tour $\tau$ can be represented either by its sequence (permutation) of the vertices, or by the sequence, or simply the set, of edges it traverses.
In this paper, it is usually more convenient to work with the edge representation.  

The edges $e=(v_i,v_{i+1})$ and $f=(v_j,v_{j+1})$, $e\neq f$, are  \emph{$k$-neighbours} on $\tau$, if and only if a shortest path between $e$ and $f$ on $\tau$ containing these edges has exactly $k$ edges, for $k\geq 2$.  Here the shortest path refers to the path with the least number of edges, rather than the minimum cost path.  Thus $e$ and $f$ are 2-neighbours in $\tau$ if and only if they share a common node in $\tau$.

Let $q(e,f)$ be the cost of the pair $(e,f)$ of edges and $\delta(k,\tau)=\{(e,f) : e,f \in \tau \text{ and } e  \text{ and } f$  are
 $p$-neighbours on  $\tau \text{ for some } 2\leq p\leq k \}$.  Assume that $q(e,f)=q(f,e)$ for every pair of edges $e,f\in E$.
Then the \emph{$k$-neighbour TSP} (TSP($k$)) is defined as in~\cite{Woods2010}

\begin{eqnarray*}
TSP(k): &\textrm{Minimize }   &  \sum_{(e,f)\in \delta(k,\tau)}q(e,f)+\sum_{e\in \tau}c(e)  \\
&\textrm{Subject to } &  \tau \in \mathcal{F}.
\end{eqnarray*}

A closely related problem, the Quadratic TSP (QTSP), is defined as follows:

\begin{eqnarray*}
QTSP: &\textrm{Minimize }   &  \sum_{(e,f)\in \tau\otimes \tau}q(e,f)+\sum_{e\in \tau}c(e)  \\
&\textrm{Subject to } &  \tau \in \mathcal{F}.
\end{eqnarray*}

\noindent where $\tau\otimes \tau = \tau\times \tau \setminus \{(e,e) : e\in \tau \}$. Note:

 \begin{equation*} \tau\otimes \tau =
\begin{cases}
\delta(n/2,\tau) &\mbox{ if $n$ is even}\\
\delta((n+1)/2,\tau) & \mbox{ if $n$ is odd}.
\end{cases}
\end{equation*}

Thus when $k\geq n/2$ (for $n$ even) or $k\geq (n+1)/2$ (for $n$ odd), the $k$-neighbour TSP reduces to the Quadratic TSP~\cite{Woods2016}.  Define TSP($1$) to be the original TSP.  Elsewhere in the literature (e.g. \cite{Fisch2011}, \cite{Fisch2011-2}), the term Quadratic TSP is sometimes used for what we refer to as TSP($2$).  That is, quadratic terms are allowed, but only for pairs of edges that share a node.  

The bottleneck version of TSP($k$) was introduced by Arkin et al.~in~\cite{Arkin:1999}, denoted as the $k$-neighbour maximum scatter TSP.  J\"{a}ger and Molitor~\cite{Jager:2008} encountered TSP(2)
while studying the Permuted Variable Length Markov Model.  Several heuristics are proposed and compared in~\cite{Fischer2014,Jager:2008} as well as a branch and bound algorithm for TSP(2)
in~\cite{Fischer2014}.  A column generation approach to solve TSP($2$) is given in~\cite{Rostami2013}, lower bounding procedures discussed in~\cite{rostami2016lower}, and polyhedral results were
reported by Fischer and Helmberg~\cite{Fisch2011}, Fischer~\cite{Fisch2011-2}, and Fischer and Fischer~\cite{Fischer2015}.  The $k$-neighbour TSP is also related to the $k$-peripatetic salesman
problem~\cite{Della95,Krarup75} and the watchman problem~\cite{Chin86}. Algorithms for maximization and minimization versions of TSP(2) were studied by Stan\v{e}k~\cite{rs} and Oswin et~al.~\cite{Oswin17}.  To the best of our knowledge, no other works in the literature address TSP($k$).

Referring to the UAV example discussed earlier, it is clear that the flight subpaths depend on both the angle and distances between successive nodes.  By precalculating these and assigning costs to $q(e,f)$, for $e,f\in E$, we see that QTSP is a natural model for this problem.  In fact, the flight paths may be affected by edges further downstream.  Thus we can get successively better models by considering TSP($1$), TSP($2$),$\ldots$, TSP($k$) in turn.  In practice we expect diminishing returns to take hold quickly and hence TSP(k) with small values of $k$ are of particular interest.

In this paper we show that QTSP is NP-hard even if the costs are restricted to 0-1 values and the underlying graph is Halin.  In contrast, TSP and TSP(2) on a Halin graph can be solved in $O(n)$
time~\cite{Cornuejols83, Woods2016}.  Interestingly, we show that  TSP($3$) can also be solved on a Halin graph in $O(n)$ time, although as we move from TSP(2) to TSP(3), the problem gets much more
complicated. In fact, our approach can be extended to obtain polynomial time algorithms for TSP($k$) whenever $k = O(\log n)$.  We note that while Halin graphs have treewidth 3, the results on graphs
with bounded treewidth (e.g.~\cite{Bodlaender1988,Courcelle1990}) usually cannot easily be extended to optimization problems with quadratic objective functions.

The paper is organized as follows.  In Section~\ref{HalinSec} we introduce some preliminary results and notations for the problem.  The complexity result for QTSP on Halin graphs is given in Section~\ref{ComplexSec}.  An $O(n)$ algorithm to solve TSP($3$) on Halin graphs is given in Section~\ref{TSPkHalinSec}, which can be extended to obtain an $O(n2^{(k-1)/2})$ algorithm for TSP($k$).  Further extensions of this result to fully reducible classes of graphs are briefly discussed in Section~\ref{ReducibleSec}.

An earlier version of the NP-completeness results presented here were included as part of the M.Sc.\@ thesis of the first author~\cite{Woods2010}.

%%%%%%%%%%%%%%%%%%%%%%%%%%%%%%%%%%%%%%%%%%%%%%
\section{Notations and definitions}\label{HalinSec}

A Halin graph $H=T\cup C$ is obtained by embedding a tree with no nodes of degree two in the plane and connecting the leaf nodes of $T$ in a cycle $C$ so that the resulting graph remains planar.  Unless otherwise stated, we always assume that a Halin graph or its subgraphs are given in the planar embedded form.  The non-leaf nodes belonging to $T$ are referred to as \textit{tree} or \textit{internal} nodes and the nodes in $C$ are referred to as \textit{cycle} or \textit{outer} nodes of $H$.  A Halin graph with exactly one internal node is called a \textit{wheel}.  If $H$ has at least two internal nodes and $w$ is an internal node of $T$ which is adjacent to exactly one other internal node, then $w$ is adjacent to a set of consecutive nodes of $C$, which we denote by $C(w)$.  Note that $|C(w)|\geq 2$.  The subgraph of $H$ induced by $\{w\} \cup C(w)$ is referred to as a \textit{fan}, and we call $w$ the centre of the fan.  See Figure~\ref{fig:fan}.

\begin{figure}[ht]
	\centering
	\includegraphics[page=1,clip,trim=85 530 90 65]{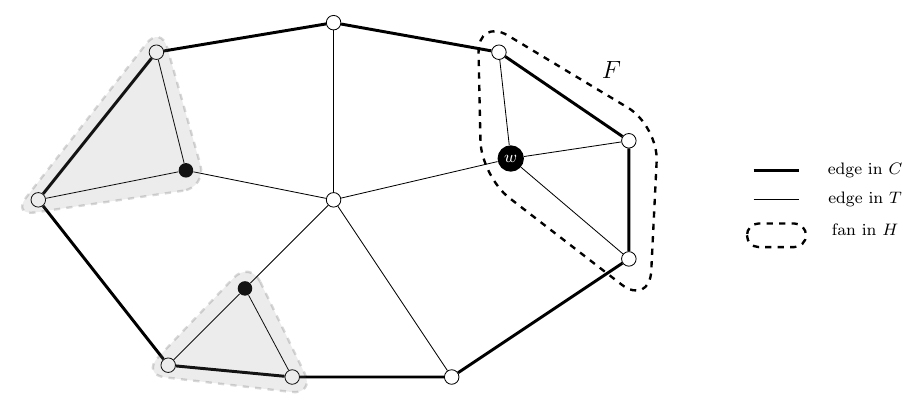}
	\caption{A Halin graph $H$ with 3 fans.  $w$ is the centre of fan $F$.}
	\label{fig:fan}
\end{figure}

\begin{lemma}[Cornuejols et al.~\cite{Cornuejols83}] \label{notwheel2fans}
	Every Halin graph which is not a wheel has at least two fans.
\end{lemma}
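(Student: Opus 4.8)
The plan is to induct on the number of internal nodes of $T$ (equivalently, on $|V(H)|$), using the fan-deletion operation that underlies Cornuéjols, Naddef and Pulleyblank's structure theory for Halin graphs. Let $H = T \cup C$ be a Halin graph that is not a wheel, so $T$ has at least two internal nodes. Since $T$ is a tree with at least two internal nodes, the subtree $T'$ obtained by deleting all leaves of $T$ is a tree with at least one edge, and hence has at least two leaves; each leaf of $T'$ is an internal node $w$ of $T$ that is adjacent in $T$ to exactly one other internal node. By the definition of a fan given above, each such $w$ is the centre of a fan $F(w)$ with $|C(w)| \geq 2$. This already shows there are at least two nodes $w$ that are \emph{centres} of fans; the only thing left to check is that distinct such centres really give rise to distinct (in fact edge-disjoint, or at least non-identical) fans, which will complete the proof.

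First I would make precise the reduction step: given a fan $F$ with centre $w$ adjacent to the internal node $w'$, contracting $F$ to a single new leaf node attached to $w'$ (and splicing that node into the cycle $C$ in place of the path $C(w)$) yields a smaller Halin graph $H^\ast$. One checks that $H^\ast$ is still a Halin graph: its tree is $T$ with the leaves in $C(w)$ and the node $w$ removed and a new leaf hung off $w'$, which still has no degree-two nodes because $w'$ had degree at least $3$ in $T$; planarity and the cycle structure are preserved since $C(w)$ is a set of \emph{consecutive} cycle nodes. This is exactly the setup needed for the later $O(n)$ algorithm, so stating it here is natural.

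Then the induction: if $H^\ast$ is a wheel, then $H$ was obtained from a wheel by expanding one leaf into a fan, and one verifies directly by inspecting the wheel that $H$ then has (at least) two fans — the new one $F$ plus at least one more coming from another spoke of the wheel. If $H^\ast$ is not a wheel, by the inductive hypothesis it has at least two fans $F_1, F_2$; at least one of them, say $F_1$, does not involve the newly created leaf node, so $F_1$ is also a fan of $H$, and together with $F$ this gives two fans of $H$. In either case $H$ has at least two fans.

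The main obstacle — and it is a mild one — is bookkeeping around the boundary between a fan and the rest of the graph: one must make sure that when we expand a leaf of $H^\ast$ back into the fan $F$, the fan $F_1$ we keep from $H^\ast$ still satisfies the definition of a fan in $H$ (its centre is still adjacent to exactly one internal node, and $C(F_1)$ is still a set of consecutive outer nodes of $H$). This is immediate because expansion only modifies the tree and cycle locally around $w'$, away from $F_1$ whenever $F_1$ does not use the new leaf. An alternative, entirely non-inductive route is to argue directly from the tree $T$: the tree $T'$ of internal nodes has $\geq 2$ leaves, each leaf $w$ of $T'$ has $C(w)$ a block of consecutive cycle vertices, and two distinct leaves $w \neq w''$ of $T'$ have $C(w) \cap C(w'') = \varnothing$ (a cycle vertex is adjacent in $T$ to exactly one internal node), so $F(w)$ and $F(w'')$ are distinct fans. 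I would likely present this direct argument as the proof, since it avoids invoking the reduction machinery before it is formally introduced, and relegate the inductive phrasing to a remark.
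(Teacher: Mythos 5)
The paper states this lemma without proof, simply citing Cornu\'ejols, Naddef and Pulleyblank, so there is no in-paper argument to compare against. Your direct argument is correct and complete, and it is essentially the standard one: the internal nodes of $T$ induce a subtree $T'$ with at least two vertices (since $H$ is not a wheel), hence $T'$ has at least two leaves; each leaf $w$ of $T'$ is adjacent to exactly one other internal node and, having degree at least $3$ in $T$, to at least two cycle nodes, so it is the centre of a fan; and distinct leaves of $T'$ yield disjoint sets $C(w)$ because each cycle node has a unique tree neighbour. I would drop the inductive framing entirely: it adds the burden of verifying that fan contraction preserves Halin-ness (which the paper only establishes later, as Lemma~\ref{contractHalin}) and its base case is slightly misstated --- when $H^\ast$ is a wheel the second fan of $H$ is centred at the hub $w'$ itself rather than ``coming from another spoke.'' The three-line direct argument you relegate to the end is the proof.
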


Let $G=(V,E)$ be a graph and let $S\subseteq V$ be a connected subgraph of $G$.  Let $\varphi(S)$ be the cutset of $S$, that is, the smallest set of edges whose removal disconnects $S$ from the vertices in $V\setminus S$.  Let $G/S$ be the graph obtained by contracting $S$ into a single node, called a `pseudonode' denoted by $v_S$~\cite{Cornuejols83}.  The edges in $G/S$ are obtained as follows:

\begin{enumerate}
	\item An edge with both ends in $S$ is deleted;
	\item An edge with both ends in $G-S$ remains unchanged;
	\item An edge $(v_1,v_2)$ with $v_1\in G-S$, $v_2\in S$ is replaced by the edge $(v_1, v_s)$.
\end{enumerate}

\begin{lemma}[Cornuejols et al.~\cite{Cornuejols83}] \label{contractHalin}
	If $F$ is a fan in a Halin graph $H$, then $H/F$ is a Halin graph.
\end{lemma}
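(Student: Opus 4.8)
The plan is to show directly that contracting a fan $F$ in a Halin graph $H = T \cup C$ produces a graph that is again of the form (tree without degree-two nodes) $\cup$ (cycle through its leaves), embedded planarly. Let $w$ be the centre of $F$ and let $C(w) = \{u_1, \ldots, u_m\}$ be the consecutive cycle nodes adjacent to $w$, with $m \ge 2$. Since $w$ is an internal node adjacent to exactly one other internal node, call that neighbour $p$. First I would describe the cutset: $\varphi(F)$ consists of the single tree edge $(w,p)$ together with the two cycle edges $(u_0, u_1)$ and $(u_m, u_{m+1})$ leaving the two ends of the consecutive block $C(w)$, where $u_0, u_{m+1} \notin C(w)$. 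So $H/F$ is obtained by deleting all edges inside $\{w\} \cup C(w)$, introducing the pseudo-node $v_F$, and reattaching these three edges to $v_F$.

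Next I would identify the tree and cycle of $H/F$. Set $T' = (T \setminus F)$ with $w$ and its incident cycle spokes removed, plus the pseudo-node $v_F$ joined to $p$ by the edge that was $(w,p)$; and set $C' = (C \setminus \{u_1, \ldots, u_m\}) \cup \{v_F\}$ with $v_F$ inserted between $u_0$ and $u_{m+1}$, using the (renamed) edges $(u_0, v_F)$ and $(v_F, u_{m+1})$. The key verifications are: (i) $T'$ is a tree — deleting the subtree consisting of $w$ alone (its only tree neighbour being $p$) and contracting leaves it connected and acyclic, and relabelling $w$ as $v_F$ changes nothing structurally; (ii) $C'$ is a cycle through exactly the leaves of $T'$ — the nodes $u_1, \ldots, u_m$ were leaves of $T$ and are gone, $v_F$ is now a leaf of $T'$ (its only $T'$-edge is to $p$) and is a cycle node, and every other node retains its status; (iii) $H/F = T' \cup C'$; and (iv) planarity is preserved, since contracting a connected subgraph of a plane graph always yields a plane graph — concretely, $F$ occupies a disk in the plane (it is a fan, drawn as $w$ with spokes to a consecutive arc of $C$), and collapsing that disk to a point keeps the embedding planar. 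I should also note $v_F$ does not have degree two in $T'$: its unique tree edge goes to $p$, so as a leaf it is fine, and no new degree-two tree node is created since $w$ had degree at least three (one edge to $p$, at least two spokes) and we only removed the spokes.

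The one place requiring care — and the step I expect to be the main (mild) obstacle — is confirming that the cutset really is exactly these three edges and, relatedly, that $C(w)$ really is a \emph{consecutive} arc of $C$ with distinct endpoints $u_0 \ne u_{m+1}$; this is where we must invoke the hypothesis that $w$ is adjacent to exactly one other internal node (so all of $w$'s other neighbours lie on $C$) and the planarity of the embedding (so those cycle-neighbours form a contiguous block, as stated in the fan definition and Figure~\ref{fig:fan}). Once $\varphi(F) = \{(w,p), (u_0,u_1), (u_m, u_{m+1})\}$ is pinned down, the reattachment rules (1)--(3) for $G/S$ give the description of $H/F$ above essentially by inspection. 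A degenerate subcase to mention is when $H$ itself is a wheel or when removing $F$ would leave too small a graph; but by Lemma~\ref{notwheel2fans} a non-wheel Halin graph has at least two fans, so if $F$ is a fan then $H$ has another fan and $H/F$ still has at least one internal node, hence is a genuine Halin graph (a wheel being permitted as the base case). I would close by assembling these observations into the statement that $H/F = T' \cup C'$ with $T'$ a tree with no degree-two node and $C'$ a planar cycle through its leaves, i.e.\ $H/F$ is a Halin graph.
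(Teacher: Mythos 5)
The paper does not prove this lemma at all: it is quoted directly from Cornu\'ejols, Naddef and Pulleyblank~\cite{Cornuejols83}, so there is no in-paper argument to compare against. Judged on its own, your direct structural verification is correct and is essentially the standard proof: the cutset $\varphi(F)=\{(w,p),(u_0,u_1),(u_m,u_{m+1})\}$ is exactly the three edges leaving $\{w\}\cup C(w)$ (each $u_i$ is a leaf of $T$, so its only edges are the spoke to $w$ and its two cycle edges, and $w$'s only neighbour outside $F$ is $p$); contracting the star $\{w,u_1,\ldots,u_m\}$, which is a connected subtree of $T$, leaves a tree $T'$ in which $v_F$ is a leaf and no other degree changes, so no degree-two node appears; the leaves of $T'$ are precisely the nodes of $C'$; and planarity is preserved under contraction of a connected subgraph. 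The one point you flag but leave slightly informal --- that $u_0\neq u_{m+1}$, so that $C'$ is a genuine cycle rather than a doubled edge --- does close as you suggest: a fan exists only when $H$ has at least two internal nodes, so by Lemma~\ref{notwheel2fans} there is a second fan whose $\geq 2$ outer nodes lie in $C\setminus C(w)$, forcing $|C\setminus C(w)|\geq 2$. With that sentence made explicit, the proof is complete; it also correctly handles the base case where $H/F$ degenerates to a wheel, which is still a Halin graph.
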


Note that each time a fan $F$ is contracted using the graph operation $H/F$, the number of non-leaf nodes of the underlying tree is reduced by one.  That is, after at most $\lceil (n-1)/2 \rceil$ fan contractions, a Halin graph will be reduced to a wheel.

Let $w$ be the centre of a fan $F$, and label the outer nodes in $F$ in the order they appear in $C$ as, $u_1,u_2,\ldots ,u_r$ ($r\geq 2$).  Let $(j,k,l)$ be the $3$-edge cutset $\varphi (F)$ which disconnect $F$ from $G$ such that $j$ is adjacent to $u_1$, $k$ is adjacent to $w$ but not adjacent to $u_i$ for any $i$, $1\leq i \leq r$, and $l$ is adjacent to $u_r$ (See Figure~\ref{fig:Halin2}, $r=4$).

Note that every Hamiltonian cycle $\tau$ in $H$ contains exactly two edges of $\{ j,k,l \}$.  The pair of edges chosen gives us a small number of possibilities for traversing $F$ in a tour $\tau$.  For example, if $\tau$ uses $k$ and $l$, it contains the subsequence $w,u_1,u_2,\ldots ,u_r$ (call this a \emph{left-traversal} of $F$), if $\tau$ uses $j$ and $k$ it contains the subsequence $u_1,u_2,\ldots u_r,w$ (call this a \emph{right-traversal} of $F$) and if $\tau$ uses $j$ and $l$, it contains a subsequence of the form $u_1,u_2,\ldots ,u_i,w,u_{i+1},\ldots ,u_r$, for some $i\in \{1,2,\ldots ,r-1\}$ as it must detour through the centre of $F$ (call this a \emph{centre-traversal} of $F$).

\begin{figure}[ht]
	\centering
	\includegraphics[page=1,clip,trim=90 120 95 485]{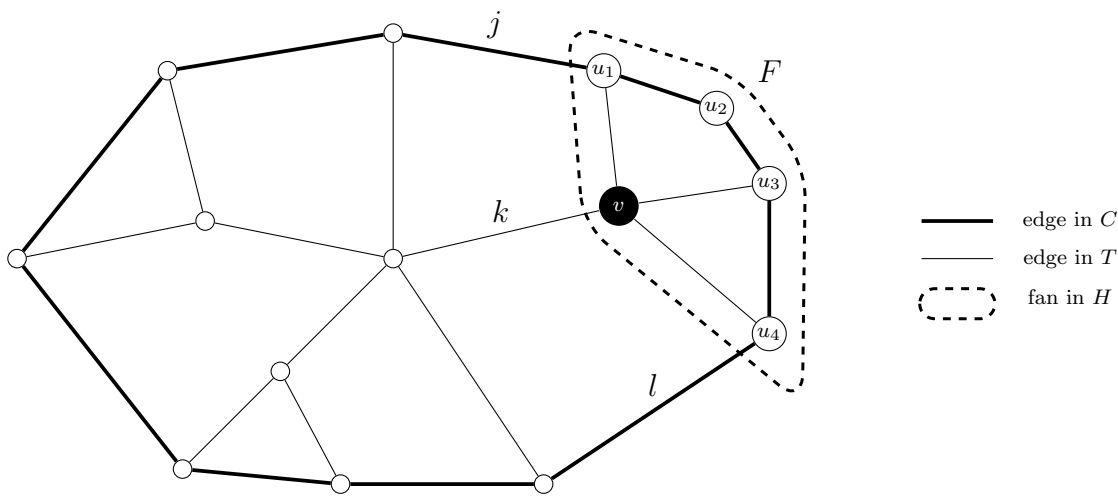}
	\caption{A Halin graph $H$ containing fan $F$. $\{ j,k,l\}$ is a 3-edge cutset which disconnects $F$ from $H$.}
	\label{fig:Halin2}
\end{figure}

\section{Complexity of QTSP on Halin graphs}\label{ComplexSec}

Many optimization problems that are NP-hard on a general graph are solvable in polynomial time on a Halin graph~\cite{Cornuejols83,Lou07,Kabadi98}.  In particular, TSP on a Halin graph is solvable in linear time. Unlike this special case, we show that QTSP is strongly NP-hard on Halin graphs. The decision version of QTSP on a Halin graph, denoted by RQTSP, can be stated as follows:

 ``Given a Halin graph $H$ and a constant $\theta$, does there exist a tour $\tau$ in $H$ such that $  \sum_{e\in \tau} c(e)+ \sum_{e,f \in \tau} q(e,f)  \leq \theta$?''

\begin{theorem}[Woods~\cite{Woods2010}]\label{t1}
    RQTSP is NP-complete even if the values $c(e)\in \{ 0,1\}$ and $q(e,f)\in \{ 0,1\}$ for $e,f\in\nolinebreak H$.
\end{theorem}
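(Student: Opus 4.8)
The plan is to reduce from a known strongly NP-complete problem by building a Halin graph in which the cheap tours encode solutions to the source problem. Since RQTSP is patently in NP (given a tour, summing the $c$ and $q$ contributions is polynomial), the work is entirely in the hardness reduction. The natural candidate to reduce from is a problem about Hamilton cycles or covers in graphs of bounded degree, since Halin graphs are sparse and planar; a good choice is the (strongly NP-complete) problem of deciding whether a planar cubic graph has a Hamilton cycle, or alternatively an exact-cover / 3-SAT variant. The structural difficulty is that Halin graphs \emph{always} have Hamilton cycles (indeed many of them), so the hardness cannot come from the feasibility constraint $\tau\in\mathcal F$; instead it must be forced entirely through the quadratic penalties $q(e,f)$, which is exactly why the quadratic objective makes the problem hard while ordinary TSP on Halin graphs is linear-time.

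First I would fix the gadget: take the underlying tree $T$ so that $H=T\cup C$ has one fan (or a small constant number of fans) per "variable" or "vertex" of the source instance, using the traversal dichotomy established in Section~\ref{HalinSec} — each fan can be traversed in left-, right-, or centre-fashion, and these discrete choices will serve as the boolean/combinatorial variables. Next I would design the $0$–$1$ costs $q(e,f)$ so that a pair of edges incurs cost $1$ exactly when the corresponding pair of fan-traversals is "inconsistent" with a clause or adjacency constraint, and cost $0$ otherwise; the edge costs $c(e)$ can likewise be set in $\{0,1\}$ to normalize the contribution of a consistent choice to a fixed value, so that a tour of total cost $\le\theta$ exists if and only if the source instance is a yes-instance. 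The key point to verify is that in a Halin graph the only freedom a tour has, beyond the fan-traversal choices, is local and controllable, so that the global optimum decomposes along the fans and the penalties can be read off pairwise.

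The main obstacle I anticipate is \textbf{locality of the penalties versus globality of the constraints}: QTSP allows $q(e,f)$ for \emph{any} pair of tour edges, not just neighbouring ones, which is both an asset (I can penalize far-apart fan choices directly) and a liability (I must ensure no unintended cheap tour exploits an interaction I did not account for). Controlling this requires a careful accounting argument — essentially showing that for the constructed $H$, every Hamilton cycle's cost equals a baseline plus a sum of independent "violation" indicators over the encoded constraints, with no cross-terms left uncounted. A secondary technical point is keeping the graph genuinely Halin (planar, tree with no degree-two node, leaves forming a single cycle) while wiring in enough fans and the auxiliary edges needed to route the cycle through them; here Lemma~\ref{notwheel2fans} and Lemma~\ref{contractHalin} are reassuring structural facts, but the explicit embedding still has to be exhibited. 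Once the construction and the cost-decomposition lemma are in place, the equivalence "yes-instance $\iff$ tour of cost $\le\theta$" and the polynomial bound on the size of $H$ (hence strong NP-completeness, since all numbers are $0$–$1$) follow routinely.
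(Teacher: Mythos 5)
Your high-level strategy matches the paper's: RQTSP is trivially in NP, so the work is a reduction in which fan-traversal choices act as the combinatorial variables and $0$--$1$ pair costs penalize inconsistent choices. But as written the proposal is a plan rather than a proof, and the pieces you leave open are exactly the nontrivial ones. You hedge on the source problem (planar cubic Hamiltonicity vs.\ 3-SAT) and on whether fans encode variables or vertices; the paper reduces from 3-SAT with one \emph{fan gadget per clause}, built from a star on five nodes whose pendant path carries three designated \emph{literal edges} $\mu_1,\mu_2,\mu_3$. The encoding is that a centre-traversal of the gadget must skip exactly one outer edge, and skipping $\mu_i$ means ``literal $L_i$ is true.'' Neither this encoding nor the mechanism that forces it is present in your proposal.

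Concretely, two ideas are missing and both are essential. First, you need intra-gadget penalties that make every zero-cost traversal of a clause gadget a centre-traversal that skips a literal edge: the paper assigns $q(e,f)=1$ to every pair of edges in a gadget that are neither both outer edges nor both adjacent to the same $\mu_i$, so any left- or right-traversal, and any centre-traversal that detours around a non-literal edge, costs at least $1$. Second, you need a cross-gadget consistency mechanism tying together occurrences of the same variable: the paper puts $q(\mu'_m,\mu'_q)=1$ on the \emph{spoke} edges incident to $\mu_m$ and $\mu_q$ whenever $L_m=\neg L_q$, so no zero-cost tour can simultaneously set a variable true and false. You correctly flag ``unintended cheap tours'' as the main obstacle, but resolving it is the proof; until you exhibit a gadget with the skip-one-edge $\leftrightarrow$ truth-assignment correspondence, verify that the assembled graph (gadgets chained around a cycle with a single tree vertex $w$ joined to all gadget centres) is genuinely Halin, and check both directions of the equivalence at threshold $\theta=0$, the argument has a genuine gap.
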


\begin{proof} RQTSP is clearly in NP. We now show that the 3-SAT problem can be reduced to RQTSP. The 3-SAT problem can be stated as follows: ``Given a Boolean formula $R$ in Conjunctive Normal Form (CNF) containing a finite number of clauses $C_1,C_2,\ldots, C_h$ on variables $x_1,x_2,\ldots, x_t$ such that each clause contains exactly three literals ($L_1,\ldots,L_{3h}$ where for each $i$, $L_i=x_j$ or $L_i=\neg x_j$ for some $1\leq j\leq t$), does there exists a truth assignment such that $R$ yields a value `true'?''

From a given instance of 3-SAT, we will construct an instance of RQTSP. The basic building block of our construction is a 4-fan gadget obtained as follows. Embed a star on 5 nodes with center $v$ and two specified nodes $\ell$ and $r$ on the plane and add a path $P$ from $\ell$ to $r$ covering each of the pendant nodes so that the resulting graph is planar (see Figure~\ref{fig:gadget}). Call this special graph a 4-fan gadget.

\begin{figure}[ht]
	\centering
	\includegraphics[page=2,clip,trim= 235 580 225 75]{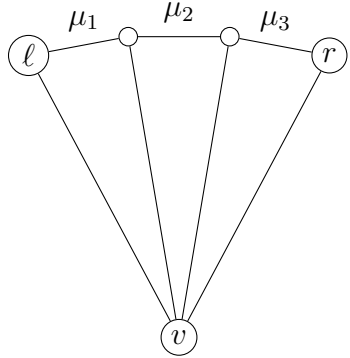}
	\caption{$4$-fan gadget constructed by embedding a star on $5$ nodes in the plane and adding a path.}% $P=\mu_1 - \mu_2 - \mu_3$ connecting the pendant nodes so that the resulting graph is planar.}
	\label{fig:gadget}
\end{figure}

The nodes on path $P$ of this gadget are called {\it outer nodes} and edges on $P$ are called \emph{outer edges}. Let $\mu_1,\mu_2,\mu_3$ be edges with distinct end points in $P$.  Note that any $\ell$-$r$ Hamiltonian path of the gadget must contain all the outer edges except one which is skipped to detour through $v$.  We will refer to an $\ell$-$r$ Hamiltonian path in a $4$-fan gadget as a \emph{center-traversal} as before.

We will construct a Halin graph $H$ using one copy of the gadget for each clause and let $\mu_1,\mu_2,\mu_3$ correspond to literals contained in that clause.  We will assign costs to pairs of edges such that every Hamiltonian cycle with cost 0 must contain a centre-traversal for each clause.  To relate a Hamiltonian cycle to a truth assignment, a centre-traversal which does not contain edge $\mu_i$ corresponds to an assignment of a \emph{true} value to literal $L_i$.

 Now construct $H$ as follows.  For each clause $C_1, \ldots ,C_h$, create a copy of the $4$-fan gadget. The $r, \ell$, and $v$ nodes of  the 4-fan gadget corresponding to the clause $C_i$ are denoted by $r_i,\ell_i$ and $v_i$ respectively. Connect the node $r_i$ to the node $\ell_{i+1}$, $i=1,2,\ldots, h$. Introduce nodes $v_x$ and $v_y$ and the edges $(\ell_1,v_x),(v_x,v_y),(v_y,r_h)$. Also introduce a new node $w$ and connect it to $v_x, v_y$ and $v_i$ for $i=1,2,\ldots, h-1$. The resulting graph is the required Halin graph $H$.  See Figure~\ref{fig:RQTSP}.

\begin{figure}[ht]
	\centering
	\includegraphics[page=2,clip,trim= 175 255 175 270]{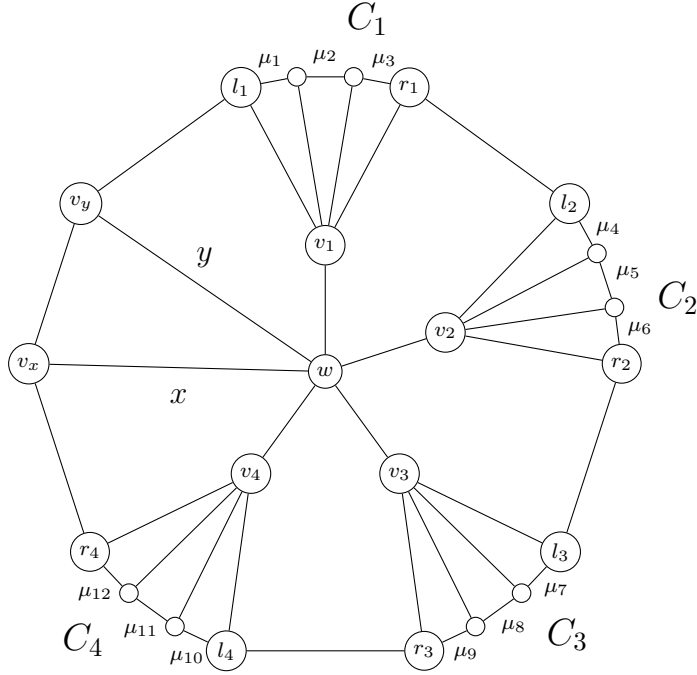}
   \caption{Example of the Halin graph constructed from $F = C_1 \wedge C_2 \wedge C_3 \wedge C_4$.}
   \label{fig:RQTSP}
\end{figure}

  Assign the cost $c(e)=0$ for every edge in $H$.  Let $x=(v_x,w)$ and $y=(v_y,w)$. Note that every tour which contains edges $x$ and $y$ traverses every gadget using a centre-traversal.  For each
gadget: assign costs of $q(e,f) = 1$ for pairs of edges which are neither outer edges nor both adjacent to the same literal edge $\mu_1,\mu_2$ or $\mu_3$, and for all other pairs of edges within the gadget assign cost $0$.  For each variable $x_j$, $j=1, \ldots,n$, and all literals $L_m,L_q$ ($m\neq q$) if $x_j=L_m=\neg L_q$, assign cost $q(\mu'_m,\mu'_q)=1$ where $\mu'_m$ and $\mu'_q$ are edges connecting $\mu_m$ (and $\mu_q$) to the respective $4$-fan gadget centre $v$.  All other paired costs are assumed to be 0.

Suppose $B$ is a valid truth assignment.  Then in each clause there exists at least one true literal.  Consider a tour $\tau$ in $H$ which contains the edges $x$ and $y$ and traverses every gadget such that $\tau$ detours around exactly one literal edge which corresponds to a literal which is \emph{true} in $B$.  Since the truth assignment is valid, such a $\tau$ exists.  Clearly $\tau$ has cost 0, since no costs are incurred by pairs of edges contained in a single gadget, nor are costs incurred of the form $q(\mu'_a, \mu'_b)$ where $L_a = \neg L_b$.  The latter must be true because in any truth assignment, the variable corresponding to $L_a$, say $x_a$, must be either assigned a value of \emph{true} or \emph{false}.  Suppose a cost of 1 is incurred by $q(\mu'_a, \mu'_b)$ and hence $L_a=\neg L_b$.  If $x_a$ is \emph{true}, and $x_a = L_a$, then $L_b$ clearly must be \emph{false}, so $\tau$ cannot detour to miss both $\mu'_a$ and $\mu'_b$.  The same contradiction arises, if $x_a$ is \emph{false}.  Hence a \emph{yes} instance of 3-SAT can be used to construct a \emph{yes} instance for RQTSP with $\theta = 0$.

Now suppose there is a tour which solves RQTSP with $\theta=0$.  Suppose $\tau'$ is such a tour.  Clearly it must use edges $x$ and $y$, and hence must traverse every gadget via a centre-traversal.  Such a detour must skip a literal edge in every gadget, otherwise a cost of 1 is incurred.  Suppose $D = \{ L_1,\ldots,L_s\}$ is the set of literals which are skipped.  $L_i \neq \neg L_j$ for any $i,j$, otherwise a cost of 1 is incurred.  This implies that a truth assignment which results in every literal in $D$ being \emph{true} is a valid truth assignment to the variables $x_1,\ldots ,x_t$.  That is, for each literal edge which is skipped in $\tau'$, assign \emph{true} or \emph{false} to the corresponding variable such that the literal evaluates to \emph{true} (if $L_i=x_j$, set $x_j=true$ and if $L_i=\neg x_j$, set $x_j=false$).  The truth values for any remaining variables can be assigned arbitrarily.  This truth assignment returns \emph{true} for each clause since exactly one literal in each clause is detoured, and evaluates to \emph{true}.  Hence this truth assignment is a valid assignment for 3-SAT. \qed
\end{proof}

\section{Complexity of $k$-neighbour TSP on Halin graphs}\label{ksec}
Let $G$ be a planar embedding of a planar graph and $e$ and $f$ are two distinct edges of $G$. Then $e$ and $f$ are said to be \emph{cofacial}  if there exists a face of $G$ which contains both $e$ and $f$.  This may include the \emph{outer face}.

\begin{theorem}\label{t3}
Let $\tau$ be a tour in the planar embedding $H=T\cup C$ of a Halin graph. Then, any two edges adjacent in $\tau$ must be cofacial.
\end{theorem}

\begin{proof}
Suppose the result is not true. Then, there exists a tour $\tau$ in  $H$ containing two adjacent edges $e=(u,x),f=(x,v)$ such that $e$ and $f$ are not cofacial.  From our previous discussion on fan traversals, we can assume that $x\not\in C$.  Since $e$ and $f$ are not cofacial at $x$, there exists edges $g=(y,x)$ and $h=(x,z)$ in $H$ such that the clockwise ordering of edges incident on $x$ is of the form $f,\ldots,g,\ldots,e,\ldots,h$ (See Figure~\ref{fig:consecHalin}).  Without loss of generality, assume $T$ is rooted at $x$. Then $T$ has at least four subtrees $T_u,T_v,T_y,T_z$ rooted respectively at $u,v,y$ and $z$.  Since $\tau$ is a tour containing the edge $e$, it must contain a path, say $P_1$, through the subtree $T_u$ from $u$ to $u_c\in C$. Note that $u_c$ could be the same as $u$ and in this case the subtree $T_u$ is the isolated node $u$.   Similarly, $\tau$ must contain a path $P_2$ in $T_v$ from $v$ to $v_c\in C$ (See Figure~\ref{fig:consecHalin}).  Note that $P=P_1\cup P_2\cup \{ e,f\}$ is a path in $\tau$.  Deleting the vertex set $V(P)$ of  $P$ and its incident edges  from $H$ yields a disconnected graph. Thus,  $\tau-V(P)$ must be disconnected, a contradiction. \qed
\end{proof}

A preliminary version of this result is given in~\cite{Woods2010}.

\begin{figure}[ht]
	\centering
	\includegraphics[page=3,clip,trim= 180 475 180 60]{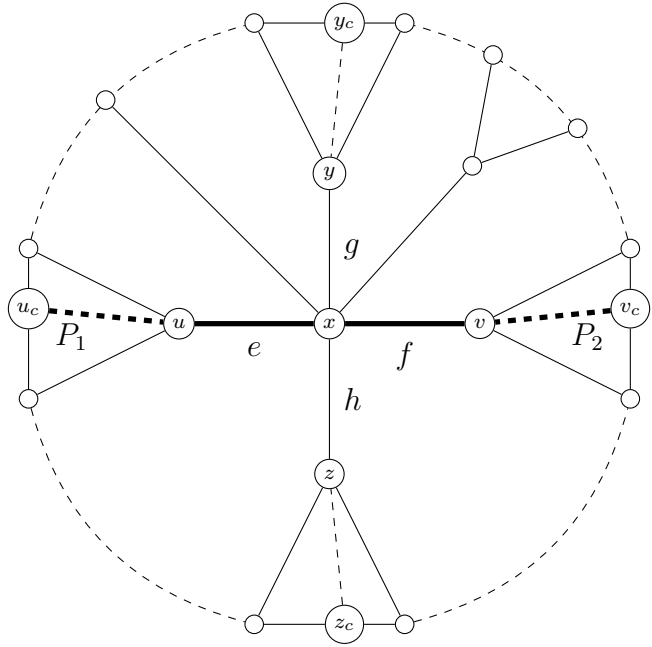}
    \caption{A Halin graph $H$ with non-consecutive edges $e$ and $f$ at node $x$.  $H-P$ where $P=P_1\cup P_2\cup \{ e,f\}$ has two components, hence no $\tau$ contains both edges $e$ and $f$.}
    \label{fig:consecHalin}
\end{figure}

A path in the planar embedding $H=T\cup C$  of a Halin graph is called a \emph{candidate paths} if its consecutive edges are cofacial. A candidate path with $k$-edges is called a \emph{candidate $k$-path}. Note that only candidate paths can be subpaths of a tour but it is possible that there are candidate paths that are not part of a tour.

\begin{corollary} \label{cor:kpaths}
Let $H=T\cup C$ be a  planar embedding of a Halin graph and $e$ be a specified edge of $H$. Then, $H$ has at most $k\cdot2^{k-1}$ candidate $k$-paths containing $e$.
\end{corollary}

\begin{proof}
    Consider extending $e$, if necessary in either directions in $H$, to a candidate $k$-path $P$.  For a given end point of a sub path of $P$, say vertex $u$, by Theorem~\ref{t3}, there are only two possible edges incident on $u$, which may belong to $P$.  Thus, there are at most $2^{k-1}$  candidate $k$-paths when the position of $e$ is fixed.  Since $e$ can take any of the $k$ positions in a candidate $k$-path, there are at most $k\cdot 2^{k-1}$  candidate $k$-paths containing $e$. \qed
\end{proof}

As an immediate consequence of Corollary~\ref{cor:kpaths}, we have an upper bound of $2^{n-2}$ on the number of Hamiltonian cycles in a Halin graph. To see this, let $r$ be the number of Hamiltonian cycles in $H$. From each such cycle, we can generate $n$ distinct Hamiltonian paths (candidate $(n-1)$-paths) by ejecting an edge. Repeating this for all Hamiltonian cycles in $H$,  we get $rn$ candidate $(n-1)$-paths and all these paths are distinct. Hence $r n$ is a lower bound on the number of candidate $(n-1)$-paths in $H$. By Corollary~\ref{cor:kpaths}, $(n-1)2^{n-2}$ is an upper bound on the number of candidate $(n-1)$-paths in $H$. Thus, $r n \leq (n-1)2^{n-2}$ and hence $r\leq \frac{n-1}{n}2^{n-2}\leq 2^{n-2}$.

Noting that there are at most $2(n-1)$ edges in $H$ and that the number of quadratic costs which are relevant is the number of candidate $k$-paths, it follows from Corollary~\ref{cor:kpaths}, that the number of quadratic costs which are relevant is bounded above by $k\cdot 2^{k}\cdot (n-1)=O(n)$ for any fixed $k$ and $O(n^{t+2})$ if $k\leq t\log n$.

Note that any face of $H$ must contain an outer edge.  Moreover, the following Corollary will prove useful.
\begin{corollary}\label{cor:face}
        If $H$ is embedded in the plane such that it is planar and $C$ defines the outer face, for any outer edge $e$ which is contained in the outer face and face $F_e$, every tour which does not contain $e$ must contain all other edges of $F_e$.
\end{corollary}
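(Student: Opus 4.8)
The plan is to pin down the combinatorial shape of the face $F_e$ and then exploit the elementary fact that a cycle meets every edge cut in an even number of edges.

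First I would describe the boundary of $F_e$ exactly. By the remark preceding the statement, every bounded face of $H$ contains an outer edge; since each outer edge lies on exactly one bounded face, and Euler's formula gives $|E(H)|-|V(H)|+1=|C|$ bounded faces against $|C|$ outer edges, this correspondence is a bijection and each bounded face contains exactly one outer edge. Let $F_e$ be the bounded face containing $e=(u,u')$. As $H$ is $2$-connected, $\partial F_e$ is a cycle, and it consists of $e$ together with tree edges only; those tree edges thus form a $u$--$u'$ path in $T$, which by uniqueness is $P=(u=w_0,w_1,\dots,w_m=u')$. Hence the edges of $F_e$ other than $e$ are precisely $(w_0,w_1),(w_1,w_2),\dots,(w_{m-1},w_m)$, and it suffices to show that a tour $\tau$ with $e\notin\tau$ contains each of them.

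Next, fix $i\in\{0,1,\dots,m-1\}$ and delete the tree edge $(w_i,w_{i+1})$ from $T$, splitting $T$ into subtrees $S\ni w_0,\dots,w_i$ and $S'\ni w_{i+1},\dots,w_m$. The structural point — the same property underlying the $3$-edge cutset $\varphi(F)$ of a fan — is that the leaves of $S$, which are outer nodes, occur consecutively along $C$, and likewise for $S'$. Since the leaf sets of $S$ and $S'$ partition $V(C)$ and the nodes $u\in S$, $u'\in S'$ are consecutive on $C$, the cut $\varphi(S)$ consists of exactly three edges: the tree edge $(w_i,w_{i+1})$, the edge $e$, and exactly one further outer edge. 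Then I would invoke the parity argument: $\tau$ meets the $3$-edge cut $\varphi(S)$ in an even number of edges, and in at least two of them since $\tau$ is connected and spans both $V(S)$ and $V(S')$; hence in exactly two, so $\tau$ omits exactly one of the three. Because $e\notin\tau$, the omitted edge is $e$, and in particular $(w_i,w_{i+1})\in\tau$. As $i$ was arbitrary, $\tau$ contains every edge of $F_e$ other than $e$.

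I expect the only step needing genuine care is the consecutiveness claim for the leaves of $S$ and $S'$ (equivalently, $|\varphi(S)|=3$): this is the standard planarity property of Halin graphs that already drives the fan construction, but it deserves an explicit sentence — that in the given embedding with $C$ as the outer face, deleting a single edge of the interior tree $T$ splits its leaves into two arcs of $C$ — rather than being used silently. Everything else is routine.
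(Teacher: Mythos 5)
Your argument is correct, and it is worth noting that the paper itself states Corollary~\ref{cor:face} without proof, presenting it as an immediate consequence of the preceding discussion of faces and of Theorem~\ref{t3} (which only constrains pairs of tour edges meeting at a node and does not, on its own, propagate along the whole boundary of $F_e$). What you supply is a genuinely self-contained and in fact more robust derivation: you first pin down $\partial F_e$ as $e$ together with the unique tree path $P$ joining the endpoints of $e$ (your Euler-count bijection between outer edges and bounded faces is a clean way to see that $F_e$ contains no second outer edge), and you then handle each edge of $P$ by exhibiting a $3$-edge cut containing it and $e$ and invoking the fact that a Hamilton cycle meets every edge cut in an even, and here necessarily equal to two, number of edges. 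This is exactly the mechanism the paper already uses for the cutset $\varphi(F)$ of a fan, so your proof is very much in the spirit of Section~2 even though the paper never writes it down. The one step you rightly flag --- that deleting a tree edge splits the leaves of $T$ into two contiguous arcs of $C$, so that the cut has exactly three edges --- does need a sentence, and you can get it almost for free from your own face bijection: the tree edge $(w_i,w_{i+1})$ lies on exactly two bounded faces, and the two outer edges associated to those faces (one of which is $e$) are precisely the two cycle edges crossing the cut. With that observation made explicit, the proof is complete.
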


\subsection{TSP($3$) on Halin graphs}\label{TSPkHalinSec}

As indicated earlier, TSP($1$) is the same as TSP, which is solvable in linear time on Halin graphs~\cite{Cornuejols83}.  TSP($2$) can also be solved in linear time by appropriate modifications of the algorithm of~\cite{Kabadi98} as indicated in \cite{Woods2016}.  However, for  $k\geq 3$, such modifications do not seem a viable option.  We now develop a linear time algorithm to solve TSP($3$).

Let us start with an alternative formulation of TSP(3).  For any subgraph $G$ of $H$, let $P_3(G)$ be the collection of all distinct candidate $3$-paths in $G$.   For each candidate $3$-path $(e,f,g)\in P_3(H)$, define
\begin{equation}\label{eq:stsp}
q(e,f,g) = q(e,g) + \frac{q(e,f) + q(f,g)}{2} + \frac{c(e) + c(f) + c(g)}{3}.
\end{equation}

Now consider the simplified problem:
\begin{eqnarray*}
STSP(3): &\textrm{Minimize }   &  \sum_{(e,f,g) \in P_3(\tau)}q(e,f,g) \\
&\textrm{Subject to } &  \tau \in \mathcal{F}.
\end{eqnarray*}

\begin{theorem}\label{thm:stsp}
    Any optimal solution to the STSP(3) is also optimal solution to TSP(3).
\end{theorem}

\begin{proof}
    For any $\tau \in \mathcal{F}$,
\begin{eqnarray*}
        \sum \limits_{(e,f,g) \in P_3(\tau)} q(e,f,g)
        & = & \sum \limits_{(e,f,g) \in P_3(\tau)} \left(q(e,g) + \frac{q(e,f) + q(f,g)}{2} +  \frac{c(e) + c(f) + c(g)}{3} \right) \\
        & = & \sum \limits_{(e,f) \in \delta(3,\tau)} q(e,f) + \sum \limits_{e \in \delta(\tau)} c(e).
\end{eqnarray*}
Thus, the objective function values of STSP($3$) and TSP($3$) are identical for identical solutions. Since the family of feasible solutions of both these problems are the same, the result follows. \qed
\end{proof}

In view of Theorem~\ref{thm:stsp}, we  restrict our attention to STSP(3).

 For TSP($1$), Cornuejols et al.~\cite{Cornuejols83} identified costs of new edges generated by a fan contraction operation  by solving a linear system of equations. This approach cannot be extended for any $k$-neighbour TSP for $k\geq 3$ as it leads to an over-determined system of equations which may be infeasible.  Instead, we extend the penalty approach used in Phillips et al.~\cite{Kabadi98}.  The idea here is to introduce a node-weighted version of the problem STSP($3$) where we use a penalty function for the nodes of $C$, the value of which depends on the edges chosen to enter and exit the node, along with some other `candidate' edges.  We iteratively contract the fans in $H$, storing the appropriate values of suitable  subpaths as we traverse the fans in a recursive way.  Once we reach a wheel, we can compute an optimal tour for the resulting problem.  Backtracking by recovering appropriate subpaths from contracted fans in sequence,  an optimal solution can be identified.

To formalize the general idea discussed above, let us first discuss the case where $H$ is a Halin graph which is not a wheel.  In this case, $H$ will have at least two fans.

Let $F$ be an arbitrary fan in $H$ with $w$ as the centre. Label the outer nodes of $F$ in the order they appear in $C$, say, $u_1,u_2,\ldots,u_r$ ($r\geq 2$).  Let $\{ j,k,l\}$ be the $3$-edge cutset $\varphi (F)$ which disconnects $F$ from $H$ such that $j$ is adjacent to $u_1$, $k$ is adjacent to $w$ and $l$ is adjacent to $u_r$.  Let $j=(u_1,u_0)$, $k=(w,x)$ and $l=(u_r,u_{r+1})$.  There are exactly two edges not connected to $F$ which are cofacial with $k$ and incident on $x$.   The first edge which follows $k$ in the clockwise orientation of edges incident on $x$ is denoted $\alpha_5$, and the other edge incident on $x$ and cofacial with $k$ is denoted by $\alpha_6$. (See Figure~\ref{fig:fanF}.)  There are exactly two edges not connected to $F$ and incident on $u_0$.  These edges are labelled $\alpha_1,\alpha_2$.  Likewise, there are exactly two edges not connected to $F$ and incident on $u_{r+1}$.  These edges are labelled $\alpha_3,\alpha_4$.  (See Figure~\ref{fig:fanF}.)  Without loss of generality $\alpha_1,\alpha_3$ are in $C$ and $\alpha_2,\alpha_4$ are in $T$.  It is possible that $\alpha_2$ could be the same as $\alpha_5$ and also possible that $\alpha_4$ could be the same as $\alpha_6$.

To complete a fan contraction operation, we consider the 3 types of traversals of $F$. We define a penalty function stored at nodes (pseudonodes) of $C$ which contains attributes of a minimum traversal of $F$ of each type.  For any left- or right-traversal of $F$, there is a single path through $F$ using all cycle edges.  Any tour which includes $j$ and $k$ must pass through one edge of the pair incident on $u_0$ lying outside $F$ together with edges $y_1,\ldots,y_{r-1},t_r,k,\alpha_6$.  Similarly, any tour which includes $k$ and $l$ must pass through one edge of the pair incident on $u_{r+1}$ lying outside $F$ together with edges $y_{r-1},\ldots,y_{1},t_1,k,\alpha_5$.  Any tour which includes $j$ and $l$ must also pass through one edge in each of the pairs of edges incident on $u_0$ (or $u_{r+1}$) lying outside $F$.  That is, every tour $\tau$ containing $j$ and $l$ must contain a path containing one collection of edges from the set $\{ (\alpha_1,j,l,\alpha_3), (\alpha_1,j,l,\alpha_4), (\alpha_2,j,l,\alpha_3),(\alpha_2,j,l,\alpha_4)\}$.  We refer to a centre-traversal of $F$ which bypasses $y_1\in F\cap C$ as a left path, one which bypasses $y_s\in F\cap C$ for some $s\in [2,r-2]$ as a middle path, and one which bypasses $y_{r-1}\in F\cap C$ as a right path.

\begin{figure}[ht]
	\centering
	\includegraphics[page=3,clip,trim= 170 260 170 370]{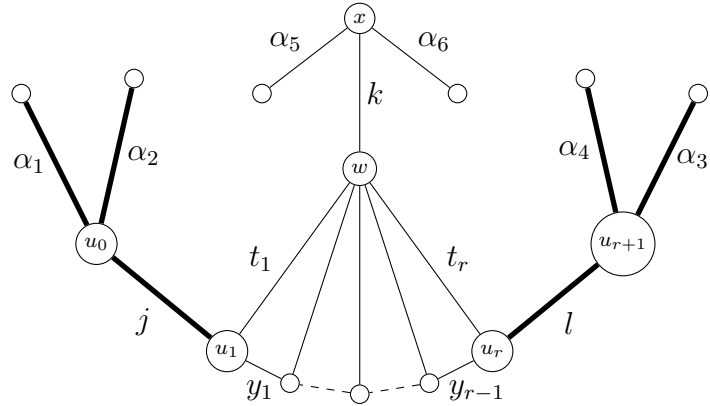}
 	\caption{A fan $F$ with centre $w$.  Every $\tau$ containing $j$ and $k$ contains edges $y_1,\ldots,y_{r-1},t_r$.  Every $\tau$ containing $k$ and $l$ contains edges $y_1,\ldots,y_{r-1},t_1$.  Every $\tau$ containing $j$ and $l$ must contain one of the subpaths from the set $\{ \alpha_1-j-l-\alpha_3, \alpha_1-j-l-\alpha_4, \alpha_2-j-l-\alpha_3,\alpha_2-j-l-\alpha_4\}$.  }
    \label{fig:fanF}
\end{figure}

Let $\mathcal{S}$ be the set of nodes (pseudonodes) in $C$ at some iteration in the contraction process.  In TSP(3), quadratic costs are `absorbed' during each fan contraction operation, depending both on the edges in $F$ and within a distance $2$ from $F$, so care must be taken to retain the proper information.  That is, in order to develop an extension of the penalty approach used in Phillips et al.~\cite{Kabadi98} for TSP(3), we extend the penalty function stored at the nodes (pseudonodes) in $C$ that depends on an additional parameter $\rho$, which specifies the structure of edges around each pseudonode.  Note that due to the recursive property of pseudonodes where a fan contraction operation may `absorb' pseudonodes, the parameter $\rho$ specifies a shape of the structure rather than explicitly stating the edges surrounding a pseudonode.  Further, we define a function $\beta$ which stores the penalty values associated with $\rho$.

Let $A^1$ be the collection of ordered pairs $\{M=(0,0), L=(1,0),R=(0,1),B=(1,1)\}$ and $A^2$ be the collection of ordered pairs $\{ (1,3),(1,4),(1,6),$ $(2,3),(2,4),(2,6),(5,3),(5,4)\}$.  At each node $i\in \mathcal{S}$ we define $\rho_i=(\rho_i^1,\rho_i^2)$ where $\rho_i^1=(\rho_i^{11},\rho_i^{12})\in A^1$ and $\rho_i^2=(\rho_i^{21},\rho_i^{22})\in A^2$.  $\rho_i$ indicates which penalty value (to be defined shortly) stored at pseudonode $i$ is to contribute to the objective function value.  The first component of $\rho_i$ is a binary vector of length 2 which specifies the inner structure of $i$ (edges $y_1,t_1,y_{r-1}$ and $t_r$ prior to any fan contractions such that the first component is $0$ if $y_1$ is selected, and $1$ if $t_1$ is instead, and the second component is $0$ if $y_{r-1}$ is selected, and $1$ if $t_r$ is instead), and the second component of $\rho_i$, the outer structure ($\alpha_1$ to $\alpha_6$ prior to any contraction of adjacent pseudonodes).  Let $\rho$ be the vector containing $\rho_i$ for every $i\in \mathcal{S}$.  Let $\rho_{H/F}$ be the restriction of $\rho$ to the vertices in $H/F$ and augmented by $\rho_{v_F} \in \{(a,b)\}$.  Let $\beta_{i}(\rho_i)$ be the penalty that is incurred if $\rho_i$ occurs at $i$.

For $i\in C$ define:
\begin{equation*}
\mathcal{P}_i(\tau,\rho_i) =
\begin{cases}
	\beta_{i}(\rho_i) & \mbox{ if $\rho_{i}$ is defined,}\\
	0 & \mbox{ otherwise}.
\end{cases}
\end{equation*}

Since the penalties stored at $i\in\mathcal{S}$ depend on edges which are not incident with $i$, and the dependent edges may be `absorbed' into adjacent pseudonodes so that not every $\rho$ is feasible for a given $\tau$.  That is, the inner and outer structures of adjacent pseudonodes in $C$ must agree.  Formally, we say that $\rho$ is feasible for $\tau$ if the following conditions are satisfied for every pseudonode $i$
\begin{enumerate}
\item $j,k\in\tau \iff \rho_i^{22}=6$ and
\item $k,l\in\tau \iff \rho_i^{21}=5$,
\end{enumerate}
and for every pair of consecutive pseudonodes $i,i+1\in C$
\begin{enumerate}
\item $\rho_i^{12} =1 \iff \rho_{i+1}^{21}=2$,
\item $\rho_i^{12} =0 \iff \rho_{i+1}^{21}=1$,
\item $\rho_i^{22}=3 \iff \rho_{i+1}^{11} =0$ and
\item $\rho_i^{22}=4 \iff \rho_{i+1}^{11} =1$.
\end{enumerate}

Let $\mathcal{F'}$ be the set of all feasible $(\tau,\rho)$ pairs.  For an example of a feasible $(\tau,\rho)$ pair, see Figure~\ref{fig:rho}.

\begin{figure}[ht]
	\centering
	\includegraphics[page=4,clip,trim= 145 230 145 460]{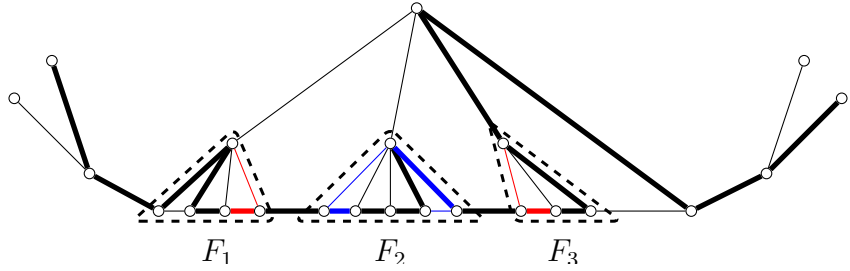}
    \caption{A subgraph of $H$ which becomes a fan after contracting $F_1,F_2$ and $F_3$.  The bold edges depict a centre-traversal of $F$ contained in Hamiltonian cycle $\tau$.  Since $\tau$ contains the edges $t_1$ and $y_{r-1}$ (relative to $F_1$), the inner structure of $v_{F_1}$ must be $(1,0)$, and since $\tau$ contains edges $\alpha_2$ and $\alpha_3$ (relative to $F_1$), the outer structure of $v_{F_1}$ must be $(2,3)$.  That is, a feasible $(\tau,\rho)$ pair must contain $\rho_{v_{F_1}}= ((1,0),(2,3))$.  Considering all of $F_1$, $F_2$ and $F_3$, every feasible $(\tau,\rho)$ pair contains $\rho_{v_{F_1}}=((1,0),(2,3))$, $\rho_{v_{F_2}}=((0,1),(1,3))$ and $\rho_{v_{F_3}}=((0,0),(2,6))$. The edges which correspond to the inner structure of $\rho_{v_{F_2}}$ are coloured in blue and the edges which correspond to the outer structure of $\rho_{v_{F_2}}$ are coloured in red.}
	\label{fig:rho}
\end{figure}

The problem now contains a cost for every triplet of consecutive edges in tour $\tau$ and additionally, a penalty at each outer node $i$.  Consider the modified 3-Neighbour TSP on a Halin graph defined as follows:
\begin{eqnarray*}
MTSP(3): &\textrm{Minimize }   &  z(\tau,\rho) = \sum_{(e,f,g) \in P_3(\tau)}q(e,f,g) + \sum_{i\in C}\mathcal{P}_i(\tau,\rho_i) \\
&\textrm{Subject to } &  (\tau,\rho) \in \mathcal{F'}.
\end{eqnarray*}

The necessary costs to construct MTSP(3) can be obtained as required by applying (\ref{eq:stsp}) or by first embedding $H$ in the plane, and evaluating the candidate $k$-paths from Corollary~\ref{cor:kpaths}.  Note that $\mathcal{P}_v(\tau,\rho_v)$ can be computed in $O(1)$ time by storing penalty 24-tuples containing the $\beta_v$-values described in Table \ref{tab:penalty}, at each cycle node $v$.  Also note that there may be $O(2^n)$ feasible $\rho$ vectors for a given $\tau$, however, we show that the optimal $(\tau,\rho)$ pair can be found in $O(n)$-time.  It is also important to note that the set of pseudonodes is retained for reasons which will become apparent.

\begin{table} \begin{center}
\begin{tabular}{|r|c|l|}  \hline
    & Penalty & Description \\ \hline
     1& $\beta_{v_F}((0,*),(1,6))$ & traversal of $F$ with inner structure $(0,*)$ and outer structure $(1,6)$ \\
     2& $\beta_{v_F}((1,*),(1,6))$ & traversal of $F$ with inner structure $(1,*)$ and outer structure $(1,6)$ \\
     3& $\beta_{v_F}((0,*),(2,6))$ & traversal of $F$ with inner structure $(0,*)$ and outer structure $(2,6)$ \\
     4& $\beta_{v_F}((1,*),(2,6))$ & traversal of $F$ with inner structure $(1,*)$ and outer structure $(2,6)$ \\
     5& $\beta_{v_F}((*,0),(5,3))$ & traversal of $F$ with inner structure $(0,*)$ and outer structure $(5,3)$ \\
     6& $\beta_{v_F}((*,1),(5,3))$ & traversal of $F$ with inner structure $(1,*)$ and outer structure $(5,3)$ \\
     7& $\beta_{v_F}((*,0),(5,4))$ & traversal of $F$ with inner structure $(0,*)$ and outer structure $(5,4)$ \\
     8& $\beta_{v_F}((*,1),(5,4))$ & traversal of $F$ with inner structure $(1,*)$ and outer structure $(5,4)$ \\
     9& $\beta_{v_F}((0,0),(1,3))$ & traversal of $F$ with inner structure $(0,0)$ and outer structure $(1,3)$ \\
    10& $\beta_{v_F}((0,1),(1,3))$ & traversal of $F$ with inner structure $(0,1)$ and outer structure $(1,3)$ \\
    11& $\beta_{v_F}((1,0),(1,3))$ & traversal of $F$ with inner structure $(1,0)$ and outer structure $(1,3)$ \\
    12& $\beta_{v_F}((1,1),(1,3))$ & traversal of $F$ with inner structure $(1,1)$ and outer structure $(1,3)$ \\
    13& $\beta_{v_F}((0,0),(1,4))$ & traversal of $F$ with inner structure $(0,0)$ and outer structure $(1,4)$ \\
    14& $\beta_{v_F}((0,1),(1,4))$ & traversal of $F$ with inner structure $(0,1)$ and outer structure $(1,4)$ \\
    15& $\beta_{v_F}((1,0),(1,4))$ & traversal of $F$ with inner structure $(1,0)$ and outer structure $(1,4)$ \\
    16& $\beta_{v_F}((1,1),(1,4))$ & traversal of $F$ with inner structure $(1,1)$ and outer structure $(1,4)$ \\
    17& $\beta_{v_F}((0,0),(2,3))$ & traversal of $F$ with inner structure $(0,0)$ and outer structure $(2,3)$ \\
    18& $\beta_{v_F}((0,1),(2,3))$ & traversal of $F$ with inner structure $(0,1)$ and outer structure $(2,3)$ \\
    19& $\beta_{v_F}((1,0),(2,3))$ & traversal of $F$ with inner structure $(1,0)$ and outer structure $(2,3)$ \\
    20& $\beta_{v_F}((1,1),(2,3))$ & traversal of $F$ with inner structure $(1,1)$ and outer structure $(2,3)$ \\
    21& $\beta_{v_F}((0,0),(2,4))$ & traversal of $F$ with inner structure $(0,0)$ and outer structure $(2,4)$ \\
    22& $\beta_{v_F}((0,1),(2,4))$ & traversal of $F$ with inner structure $(0,1)$ and outer structure $(2,4)$ \\
    23& $\beta_{v_F}((1,0),(2,4))$ & traversal of $F$ with inner structure $(1,0)$ and outer structure $(2,4)$ \\
    24& $\beta_{v_F}((1,1),(2,4))$ & traversal of $F$ with inner structure $(1,1)$ and outer structure $(2,4)$ \\    \hline
\end{tabular} \caption{Description of penalty 24-tuple stored at pseudonodes in $C$.} \label{tab:penalty}
\end{center}
\end{table}

In the initial graph, and for all $\rho_i$, $i\in C$, set $\beta_{i}(\rho_i)=0$.  For fan $F$ in $H$, the penalties must be updated to store the costs of traversing $F$ when $F$ is contracted to pseudonode $v_F$.  Let $K$ represent the traversal of $F$ which contains only edges in $C$.  That is, $K=j-y_1-\cdots-y_{r-1}-l$.  Then $q(K)=\sum_{e-f-g\in K}q(e,f,g)$ represents the cost incurred by selecting the edges in $K$.  Let $\tau(F)$ and $\rho(F)$ be the restrictions of $\tau$ and $\rho$ to $F$, respectively.

Assign the minimum cost of the right-traversal (which contains $\alpha_s$, $s\in\{1,2\}$ and $\alpha_6$), with inner structure of the first pseudonode $a\in\{L=(1,0),M=(0,0)\}$ to $\beta_{v_F}(a,(s,6))$.  That is, assign to $\beta_{v_F}(a,(s,6))$ the sum of the costs along the traversal, $q(\alpha_s-j-y_1-\cdots-y_{r-1}-t_r-k-\alpha_6)$, together with the minimum feasible set of penalties on the outer nodes contained in $F$, $u_1,u_2\ldots,u_r$.  Note that for the case that $u_1\not\in \mathcal{S}$, it is not possible to have an inner structure $L$ or $B$, and $\beta_{v_F}(L,(s,6))=\beta_{v_F}(B,(s,6))=\infty$.  Otherwise
\begin{align} \label{betaL}
\begin{split}
	\beta_{v_F}(a,(s,6)) &= q(\alpha_s-j-y_1-\cdots-y_{r-1}-t_r-k-\alpha_6) + \min_{\substack{(\tau(F),\rho(F))\in \mathcal{F'}(F):\\\rho_{u_1}^1=a\text{ or } B,\rho_{u_r}^{22}=6}} \left \{  \sum_{i=1}^r \beta_{u_i}(\rho_{u_i}) \right \} \\
    &=  q(\alpha_s,j,y_1) + q(K) -q(y_{r-2},y_{r-1},l) + q(y_{r-2},y_{r-1},t_r) + q(y_{r-1},t_r,k) + q(t_r,k,\alpha_6) \\
	&+ \min_{\substack{(\tau(F),\rho(F))\in \mathcal{F'}(F):\\\rho_{u_1}^1=a \text{ or } B,\rho_{u_r}^{22}=6}} \left \{  \sum_{i=1}^r \beta_{u_i}(\rho_{u_i}) \right \} .
\end{split}
\end{align}
We will explain how the minimum in (\ref{betaL}) can be calculated efficiently later in this paper.

Similarly, assign the minimum cost of the left-traversal (which contains $\alpha_t$, $t\in \{3,4\}$, and $\alpha_5$) with inner structure $a\in\{M=(0,0),R=(0,1)\}$ to $\beta_{v_F}(a,(5,t))$.  In the case that $u_r\not\in \mathcal{S}$, it is not possible to have an inner structure $R$ or $B$, and $\beta_{v_F}(R,(5,t))=\beta_{v_F}(B,(5,t))=\infty$.  Otherwise
\begin{align} \label{betaR}
\begin{split}
	\beta_{v_F}(a,(5,t)) &=  q(\alpha_5,k,t_1) + q(k,t_1,y_1) + q(t_1,y_1,y_2) + q(K) - q(j,y_1,y_2) + q(y_{r-1},l,\alpha_t) \\
	&+ \min_{\substack{(\tau(F),\rho(F))\in \mathcal{F'}(F):\\\rho_{u_r}^1=a\text{ or } B,\rho_{u_1}^{21}=5}} \left \{  \sum_{i=1}^r \beta_{u_i}(\rho_{u_i}) \right \} .
	\end{split}
\end{align}

Let $K(y_i), i\in\{ 1,\ldots,r-1\}$, be the centre-traversal of $F$ which does not contain $y_i$.  Then $q(K(y_i))$ represents the cost incurred by the edges in $K(y_i)$.  That is,
\begin{align*}
	q(K(y_1)) = q(K) + q(j,t_1,t_2) + q(t_1,t_2,y_2) + q(t_2,y_2,y_3) - q(j,y_1,y_2) - q(y_1,y_2,y_3),
\end{align*}
\begin{align*}
	q(K(y_p)) &= q(K) + q(y_{p-2},y_{p-1},t_p) + q(y_{p-1},t_p,t_{p+1}) + q(t_p,t_{p+1},y_{p+1}) + q(t_{p+1},y_{p+1},y_{p+2}) \\&- q(y_{p-2},y_{p-1},y_p) - q(y_{p-1},y_p,y_{p+1}) - q(y_p,y_{p+1}, y_{p+2}),
\end{align*}
for $p\in \{ 2,\ldots,r-2\}$, and
\begin{align*}
	q(K(y_{r-1})) &= q(K) + q(y_{r-3},y_{r-2},t_{r-1}) + q(y_{r-2},t_{r-1},t_r) + q(t_{r-1},t_r,l) \\&- q(y_{r-3},y_{r-2},y_{r-1}) - q(y_{r-2},y_{r-1},l).
\end{align*}

Assign the minimum cost of the centre-traversal which contains $\alpha_s$, $s\in\{1,2\}$, and $\alpha_t$, $t\in\{3,4\}$, which has inner structure $a\in \{L=(1,0),M=(0,0),R=(0,1),B=(1,1)\}$ to $\beta_{v_F}(a,(s,t))$.  In the case that $u_1\not\in \mathcal{S}$ and $a=L$, there is a single path traversing $F$ with inner structure $L$, namely, $j-t_1-t_2-y_2-\cdots-y_{r-1}-l$, so
\begin{align} \label{betaCLnode}
	\beta_{v_F}(L,(s,t)) &= q(\alpha_s,j,t_1) + q(K(y_1)) + q(y_r,l,\alpha_t)
	 + \min_{\substack{(\tau(F),\rho(F))\in \mathcal{F'}(F):\\\rho_{u_r}^{12}=0,\rho_{u_2}^{22}=6}} \left \{  \sum_{i=2}^r \beta_{u_i}(\rho_{u_i}) \right \} ,
\end{align}
and when $u_1\in \mathcal{S}$, we assign the cost of the minimum centre-traversal with inner structure $L$.  Note that this path detours some $y_g$, $g\in \{ 2,\ldots,r-1\}$.
\begin{align} \label{betaCLpnode}
	\beta_{v_F}(L,(s,t)) &= \min_{g\in\{2,\ldots,r-1\}} \left\{q(\alpha_s,j,t_1) + q(K(y_g)) + q(y_r,l,\alpha_t)
	 + \min_{\substack{(\tau(F),\rho(F))\in \mathcal{F'}(F):\\ \rho_{u_1}^{11}=1, \rho_{r}^{12}=0,\\\rho_{u_g}^{21}=5, \rho_{u_{g+1}}^{22}=6}} \left \{  \sum_{i=1}^r \beta_{u_i}(\rho_{u_i}) \right \} \right\}.
\end{align}
Similarly, when $a=R$ and $u_r\not\in \mathcal{S}$, there is a single path traversing $F$ with inner structure $R$, namely, $j-y_1-\cdots-y_{r-2}-r_{r-1}-t_{r}-l$ so
\begin{align} \label{betaCRnode}
	\beta_{v_F}(R,(s,t)) &= q(\alpha_s,j,y_1) + q(K(y_1)) + q(t_r,l,\alpha_t)
	 + \min_{\substack{(\tau(F),\rho(F))\in \mathcal{F'}(F):\\ \rho_{u_1}^{11}=0, \rho_{u_{r-1}}^{22}=6}} \left \{  \sum_{i=1}^{r-1} \beta_{u_i}(\rho_{u_i}) \right \} ,
\end{align}
and when $u_r\not\in \mathcal{S}$, we assign the cost of the minimum centre-traversal with structure $R$.  Note that this path detours some $y_g$, $g\in \{ 1,\ldots,r-2\}$.
\begin{align} \label{betaCRpnode}
	\beta_{v_F}(R,(s,t)) &= \min_{g\in\{1,\ldots,r-2\}} \left\{q(\alpha_s,j,y_1) + q(K(y_g)) + q(y_r,l,\alpha_t)
	 + \min_{\substack{(\tau(F),\rho(F))\in \mathcal{F'}(F):\\\rho_{u_1}^{11}=0, \rho_{r}^{12}=1,\\ \rho_{u_g}^{21}=5, \rho_{u_{g+1}}^{22}=6}} \left \{  \sum_{i=1}^r \beta_{u_i}(\rho_{u_i}) \right \} \right\}.
\end{align}
Now consider $a=M$.  If $u_1\in \mathcal{S},u_r\in \mathcal{S}$ then $\beta_{v_F}(M, (s,t))$ is assigned the cost of the minimum path detouring $y_g$ for $g\in\{ 1,\ldots,r-1\}$.  If $u_1$ ($u_r$) is not a pseudonode then the centre-traversal must contain $y_1$ ($y_{r-1}$) and $1$ ($r-1$) is removed from $g$ from the following equation.
\begin{align} \label{betaCM}
   	\beta_{v_F}(M,(s,t)) &= \min_{g\in\{1,\ldots,r-1\}} \left\{q(\alpha_s,j,y_1) + q(K(y_g)) + q(y_r,l,\alpha_t)
	 + \min_{\substack{(\tau(F),\rho(F))\in \mathcal{F'}(F):\\ \rho_{u_1}^{11}=0, \rho_{u_r}^{12}=0,\\\rho_{u_g}^{21}=5, \rho_{u_{g+1}}^{22}=6}} \left \{  \sum_{i=1}^{r-1} \beta_{u_i}(\rho_{u_i}) \right \} \right\}.
\end{align}
Now consider $a=B$.  This is the same as the case where $a=M$ except for the inner structure.
\begin{align} \label{betaBM}
   	\beta_{v_F}(B,(s,t)) &= \min_{g\in\{1,\ldots,r-1\}} \left\{q(\alpha_s,j,y_1) + q(K(y_g)) + q(y_r,l,\alpha_t)
	 + \min_{\substack{(\tau(F),\rho(F))\in \mathcal{F'}(F):\\\rho_{u_1}^{11}=1, \rho_{u_r}^{12}=1,\\\rho_{u_g}^{21}=5, \rho_{u_{g+1}}^{22}=6}} \left \{  \sum_{i=1}^{r-1} \beta_{u_i}(\rho_{u_i}) \right \} \right\}.
\end{align}

All $\beta_{v_F}$ which have not been assigned are not associated with a feasible $\rho_{v_F}$ and are assigned a value $\infty$.

\begin{theorem}
	Suppose $H=T\cup C$ is a Halin graph which is not a wheel and $F$ is a fan in $H$.  If $(\tau^*,\rho^*)$ is an optimal tour pair in $H$ then there exists a feasible $\rho^*_{H/F}$ in $H/F$ such that $(\tau^*/F,\rho^*_{H/F})$ is optimal in $H/F$ and $z(\tau^*,\rho^*) = z(\tau^*/F,\rho^*_{H/F})$.
\end{theorem}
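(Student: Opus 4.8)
The plan is to establish a cost-preserving bijection between feasible tour pairs of $H$ that traverse $F$ in a prescribed way and feasible tour pairs of $H/F$, and then argue that optimality is transferred across this bijection. First I would fix the optimal tour pair $(\tau^*,\rho^*)$ and examine which of the three traversal types $F$ receives in $\tau^*$: a left-traversal (uses $k,l$), a right-traversal (uses $j,k$), or a centre-traversal (uses $j,l$, detouring some $y_g$). In each case, the portion of $\tau^*$ inside $F$, together with the penalties $\sum_{i=1}^r \beta_{u_i}(\rho^*_{u_i})$ contributed by the outer nodes of $F$, is replaced by a single penalty $\beta_{v_F}(\rho^*_{v_F})$ at the new pseudo-node $v_F$, where $\rho^*_{v_F}$ is chosen to record the inner structure $a\in\{M,L,R,B\}$ (which of $y_1,t_1,y_{r-1},t_r$ the tour uses) and the outer structure (which of $\alpha_1,\dots,\alpha_6$ the tour uses to enter/leave $F$). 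The defining equations \eqref{betaL}--\eqref{betaBM} are precisely set up so that $\beta_{v_F}(\rho^*_{v_F})$ equals the minimum over all $F$-internal traversals of that type and that inner/outer structure of the sum of $q$-costs of triples meeting $F$ plus the outer-node penalties.

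Next I would verify the two halves of the cost identity. For the ``$\le$'' direction: define $\rho^*_{H/F}$ to be $\rho^*$ restricted to the common vertices, augmented by $\rho^*_{v_F}$ encoding exactly the structure $\tau^*$ used on $F$; one checks that $\rho^*_{H/F}$ is feasible for $\tau^*/F$ (the inner/outer agreement conditions between $v_F$ and its two neighbours in $C$ hold because they held for the original nodes of $F$ bordering the $\alpha_i$), and that $z(\tau^*/F,\rho^*_{H/F}) = z(\tau^*,\rho^*)$: every triple of $P_3(\tau^*)$ disjoint from $F$ survives unchanged in $P_3(\tau^*/F)$, every triple meeting $F$ is absorbed into the chosen $\beta_{v_F}$ term by construction, and the penalties at nodes outside $F$ are untouched. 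Since $\beta_{v_F}(\rho^*_{v_F})$ is a minimum it could a priori be strictly smaller than the actual $F$-internal cost of $\tau^*$ — but then one could splice the minimizing internal path back into $\tau^*$ (it has the same endpoints/structure, so it yields a legitimate tour of $H$) and strictly improve, contradicting optimality of $(\tau^*,\rho^*)$; hence equality holds. This also shows $(\tau^*/F,\rho^*_{H/F})$ attains value $z(\tau^*,\rho^*)$ in $H/F$.

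For optimality of $(\tau^*/F,\rho^*_{H/F})$ in $H/F$, suppose some feasible $(\sigma,\rho)$ in $H/F$ had $z(\sigma,\rho) < z(\tau^*/F,\rho^*_{H/F})$. I would un-contract: $\sigma$ uses two of the three edges of $\varphi(F)$ at $v_F$ and enters/leaves via two specific $\alpha_i$'s determined by $\rho_{v_F}$, so it extends to a tour $\hat\sigma$ of $H$ by inserting the $F$-internal path that realizes the minimum in the corresponding $\beta_{v_F}$ equation, with the matching internal penalty choices on $u_1,\dots,u_r$; set $\hat\rho$ to agree with $\rho$ off $F$ and with those minimizers on $F$. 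Feasibility of $\hat\rho$ for $\hat\sigma$ follows from feasibility of $\rho$ together with the structure encoding. By the same triple-accounting, $z(\hat\sigma,\hat\rho) = z(\sigma,\rho) < z(\tau^*,\rho^*)$, contradicting optimality in $H$. Therefore $(\tau^*/F,\rho^*_{H/F})$ is optimal in $H/F$ with the stated equal cost.

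The main obstacle I anticipate is the bookkeeping in the ``splicing'' arguments: one must check carefully that replacing the $F$-internal portion of a tour by a different internal path of the \emph{same} inner/outer structure always yields a genuine Hamilton cycle (it does, because the two paths have identical endpoints among $\{u_0,x,u_{r+1}\}$-incident edges and both cover exactly $\{w\}\cup C(w)$), and that the triple costs $q(e,f,g)$ that straddle the boundary of $F$ are counted once and only once — these straddling triples have their middle or end edge among $\{j,k,l\}$ and are exactly the boundary terms like $q(\alpha_s,j,y_1)$, $q(t_r,k,\alpha_6)$, $q(y_{r-1},l,\alpha_t)$ appearing explicitly in \eqref{betaL}--\eqref{betaBM}, so a case check against the feasibility conditions (1)--(2) for $v_F$ and (1)--(4) for consecutive pseudo-nodes confirms consistency. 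Once this boundary accounting is pinned down, the rest is a routine transfer of optimality.
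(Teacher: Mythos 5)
Your proposal follows essentially the same route as the paper's proof: define $\rho^*_{H/F}$ by restricting $\rho^*$ and encoding the inner/outer structure of $\tau^*$ on $F$ at the pseudo-node $v_F$, account for the triple costs and penalties absorbed into $\beta_{v_F}$ via equations \eqref{betaL}--\eqref{betaBM}, and transfer optimality by un-contracting a hypothetically better tour pair in $H/F$ to contradict optimality in $H$. If anything, you are slightly more careful than the paper at one point --- explicitly noting that $\beta_{v_F}(\rho^*_{v_F})$ is defined as a minimum and invoking a splicing argument plus optimality of $\tau^*$ to conclude that the actual traversal attains that minimum, a step the paper's equality chain passes over silently.
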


\begin{proof}
Let $\mathcal{S}$ be the set of pseudonodes in $H$.  Let $v_F$ be the pseudonode which results from the contraction of $F$, and label the vertices and edges of $F$ as in Figure~\ref{fig:fanF}.  Given $(\tau^*,\rho^*)$, we construct $(\tau^*/F,\rho^*_{H/F})$ as follows.  Let $\rho^*_{H/F,i}=\rho^*_i$ $\forall i\not\in F$ and $\rho^*_{H/F,v_F}$ corresponding to the structure of $(\tau^*,\rho^*)$ around $F$ in $H$.  That is, $\tau$ contains an $\alpha_i,\alpha_j$ path through $F$ in $H$, so let
\begin{align*}
	\rho^*_{H/F,v_F} = ((a,b),(c,d))
\end{align*}
where
\begin{align*}
	a =
	\begin{cases}
	0 & \mbox{ if ($u_1\not\in \mathcal{S}$ and $y_1 \in \tau^*$) or ($u_1\in \mathcal{S}$ and $\rho^{*11}_{u_1}=0$)} \\
	1 & \mbox{ if ($u_1\not\in \mathcal{S}$ and $y_1 \not\in \tau^*$) or ($u_1\in \mathcal{S}$ and $\rho^{*11}_{u_1}=1$)},
	\end{cases}
\end{align*}
\begin{align*}
	b =
	\begin{cases}
	0 & \mbox{ if ($u_r\not\in \mathcal{S}$ and $y_{r-1} \in \tau^*$) or ($u_r\in \mathcal{S}$ and $\rho^{*12}_{u_1}=0$)} \\
	1 & \mbox{ if ($u_r\not\in \mathcal{S}$ and $y_{r-1} \not \in \tau^*$) or ($u_r\in \mathcal{S}$ and $\rho^{*12}_{u_1}=1$)},
	\end{cases}
\end{align*}
\begin{align*}
	c =
	\begin{cases}
	 1 & \mbox{ if ($u_1\not\in \mathcal{S}$ and $\alpha_1 \in \tau^*$) or ($u_1\in \mathcal{S}$ and $\rho^{*21}_{u_{0}}=1$)} \\
	 2 & \mbox{ if ($u_{1}\not\in \mathcal{S}$ and $\alpha_2 \in \tau^*$) or ($u_{1}\in \mathcal{S}$ and $\rho^{*21}_{u_{0}}=2$)} \\
	 5 & \mbox{ if ($u_{1}\not\in \mathcal{S}$ and $\alpha_5 \in \tau^*$) or ($u_{1}\in \mathcal{S}$ and $\rho^{*21}_{u_{0}}=5$)},
	\end{cases}
\end{align*}
and
\begin{align*}
	d =
	\begin{cases}
	3 &\mbox{ if ($u_r\not\in \mathcal{S}$ and $t_r \not \in \tau^*$) or ($u_r\in \mathcal{S}$ and $\rho^{*22}_{u_{r}}=3$)} \\
	4 &\mbox{ if ($u_r\not\in \mathcal{S}$ and $t_r \in \tau^*$) or ($u_r\in \mathcal{S}$ and $\rho^{*22}_{u_{r}}=4$)} \\
	6 &\mbox{ if ($u_r\not\in \mathcal{S}$ and $t_r \in \tau^*$) or ($u_r\in \mathcal{S}$ and $\rho^{*22}_{u_{r}}=6$)}.
	\end{cases}
\end{align*}
Then $(\tau^*/F,\rho^*_{H/F,v_F})$ is feasible in $H/F$.

Using equations~(\ref{betaL})-(\ref{betaBM}) and noting that all new triples which contain $v_F$ have $q(e,f,g)=0$, we get
\begin{align*}
	z(\tau^*,\rho^*) &= \sum_{(e,f,g) \in P_3(\tau^*)}q(e,f,g) + \sum_{i\in C}\mathcal{P}_i(\tau^*,\rho^*_i)\\
	&= \sum_{\substack{(e,f,g) \in \\P_3(\tau^*-F)}}q(e,f,g) + \sum_{\substack{(e,f,g) \in \\ P_3(\tau^*) \setminus P_3(\tau^*-F)}}q(e,f,g) + \sum_{i\in C \setminus F} \beta_i(\rho^*_i) + \sum_{i \in F} \beta_i(\rho^*_i) \\
	&= \sum_{\substack{(e,f,g) \in \\P_3(\tau^*/F - v_F)}}q(e,f,g) + \sum_{i\in C \setminus F} \beta_i(\rho^*_i) + \beta_{v_F}(\rho_{v_F}) \\
	&= z(\tau^*/F,\rho^*_{H/F}).
\end{align*}

It remains to show that $(\tau^*/F,\rho^*_{H/F})$ is optimal in $H/F$.  Towards a contradiction, suppose there exists a tour pair $(\tau'/F,\rho'_{H/F})\neq(\tau^*/F,\rho^*_{H/F})$ such that $z(\tau'/F,\rho'_{H/F}) < z(\tau^*/F,\rho^*_{H/F})$.  The calculations of the minimums in equations~(\ref{betaL})-(\ref{betaBM}) imply that $\tau'/F \neq \tau^*/F$.  Using equations~(\ref{betaL})-(\ref{betaBM}), we can expand $F$, extending $(\tau'/F,\rho'_{H/F})$ to $(\tau'',\rho'')$ in $H$ with $z(\tau'/F,\rho'_{H/F})=z(\tau'',\rho'')$.  Then $z(\tau'',\rho'') = z(\tau'/F,\rho'_{H/F}) < z(\tau^*/F,\rho^*_{H/F}) = z(\tau^*,\rho^*)$, contradicting the optimality of $(\tau^*,\rho^*)$.  Hence $(\tau^*/F,\rho^*_{H/F})$ is optimal in $H/F$. \qed
\end{proof}

We now show that $\beta_{v_F}$ can be updated in $O(|F|)$-time, by introducing a structure which allows to optimally chain together the $\beta$-values for consecutive nodes in $F\cap C$.  Refer to the subgraph induced by the nodes $\{ w \} \cup \{ u_a,u_{a+1},\ldots,u_b\}$ as \textit{pseudo-fan} $PF_{a,b}$ for ($1\leq a \leq b \leq r$).  Define the minimum penalty associated with pseudo-fan $PF_{a,b}$ to be $PF_{a,b}(c,d)$ for $a,b\in \{ 1,\ldots ,r\}$ with inner structure $c\in \{ 0,1\}^2$ and outer structure $d=(d_1,d_2)\in  \{ (1,3),(1,4),(2,3),(2,4)\}$.  Note that the inner structure of pseudo-fan $PF_{a,b}$ refers to the edges $y_1,t_1$ and $y_{r-1},t_{r}$ within fans $F_a$ and $F_b$ (which are contracted to pseudonodes $u_a$ and $u_b$), respectively.  The outer structure refers to the edges $\alpha_1,\alpha_2$ and $\alpha_3,\alpha_4$ with respect to fans $F_a$ and $F_b$, if $u_a$ and $u_b$ are considered as pseudonodes.

To compute $PF_{a,b}(c,d)$ we use the following recursion which chains together the minimum $\beta$-values through consecutive nodes in $C$ while maintaining feasibility of $\rho$.

For $c_1,c_2\in \{0,1\}$, $d_1\in \{1,2\}$ and $d_2\in\{ 3,4\}$, and $1 \leq i \leq r-1$
\begin{align} \label{eq:PF(i,i)}
	PF_{1,1}((c_1,c_2),(d_1,d_2)) = \beta_{u_1}((c_1,c_2),(d_1,d_2))
\end{align}
\begin{align}
	PF_{r,r}((c_1,c_2),(d_1,d_2)) = \beta_{u_r}((c_1,c_2),(d_1,d_2))
\end{align}
\begin{align} \label{eq:PF(1,i+1)}
	PF_{1,i+1}((c_1,c_2),(d_1,d_2)) = \min \limits_{\substack{s\in \{0,1\},t\in\{3,4\}}} & \{PF_{1,i}((c_1,s),(d_1,t)) + \beta_{u_{i+1}}((t-3,c_2),(s+1,d_2))\}
\end{align}
and
\begin{align} \label{eq:PF(i-1,r)}
	PF_{i-1,r}((c_1,c_2),(d_1,d_2)) = \min \limits_{\substack{s\in \{0,1\},t\in\{1,2\}}}  \{ \beta_{u_{i-1}}((c_1,t-1),(d_1,s+3)) + PF_{i,r}((s,c_2),(t,d_2))\}
\end{align}

To prove that the recursions defined by (\ref{eq:PF(i,i)})-(\ref{eq:PF(i-1,r)}) are correct, first consider the optimal \\ $PF_{1,i+1}((c_1,c_2),(d_1,d_2))$.  The minimum feasible assignment of penalties for the nodes within pseudo-fan $PF_{1,i+1}$ is simply the minimum among optimal assignments of penalties for the nodes within pseudo-fan $PF_{1,i}$ and the penalty at pseudonode $u_{i+1}$, such that these penalties are feasible.  That is, the outer structure of $PF_{1,i}$ must match the inner structure of $u_{i+1}$ and vice-versa.  By definition, this is precisely $\min \limits_{s\in \{0,1\},t\in\{3,4\}} PF_{1,i}((c_1,s),(d_1,t)) + \beta_{u_{i+1}}((t-3,c_2),(s+1,d_2))$.  An analogous argument holds for $PF_{i-1,r}((c_1,c_2),(d_1,d_2))$.

It is now possible to compute the minimum traversals of $F$ used in equations~(\ref{betaL})-(\ref{betaBM}).  For example, middle paths through $F$ have a cost which minimizes the sum of the penalties on $PF_{1,i-1} + \beta_i$ and $\beta_{i+1} + PF_{i+2,r}$, such that both pairs are feasible.  Note that the minimum $\rho(F)$ has been found while performing the recursion defined by equations (\ref{eq:PF(i,i)})-(\ref{eq:PF(i-1,r)}).  In the worst case, when all outer nodes in $F$ belong to $\mathcal{S}$, we must compute $PF_{1,1},\ldots, PF_{1,r-1}$ and $PF_{r,r},\ldots,PF_{2,r}$, which can be done in $O(|F|)$-time.  By pre-computing these, one can evaluate the minimum traversals of $F$ used in equations~(\ref{betaL})-(\ref{betaBM}), in $O(|F|)$-time and hence one can update $\beta_{v_F}$ in $O(|F|)$-time.

We iteratively perform the fan contraction operation, updating costs and penalties until we are left with a wheel.  The optimal tour in $H$ skirts the cycle $C$ and detours exactly once through centre $w$, skipping exactly one edge of $C$.  Orient the cycle $C$ in the clockwise direction.  $\tau$ contains all edges in $C$ except for the skipped edge, say $c_i=(u_i,u_{i+1})$, together with the two edges which detour around $c_i$.  Define function $\phi(c_i)$ for each edge $c_i=(u_i,u_{i+1}) \in E(C)$.  Let $t_i$ and $t_{i+1}$ be the tree edges adjacent to $u_i$ and $u_{i+1}$, respectively.

\begin{figure}[ht]
	\centering
	\includegraphics[page=4,clip,trim= 190 535 200 65]{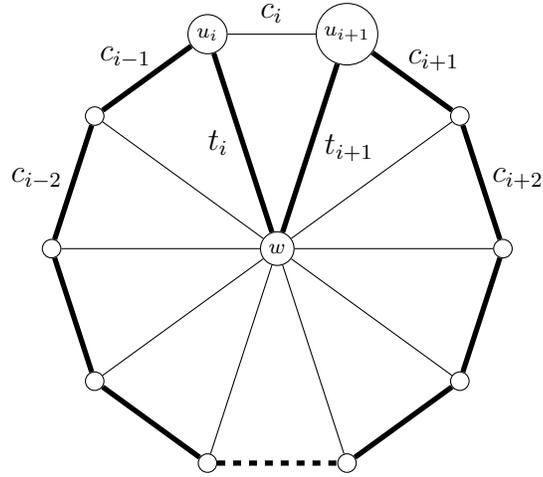}
    	\caption{A tour $\tau$ in a wheel, which skips edge $c_i$.}
    	\label{fig:wheelopt}
\end{figure}

We will define
\begin{eqnarray*}
    \phi(c_i)& = & q(c_{i-2},c_{i-1},t_i) + q(c_{i-1},t_i,t_{i+1}) + q(t_i,t_{i+1},c_{i+1}) \\ &&+ q(t_{i+1},c_{i+1},c_{i+2}) - q(c_{i-2},c_{i-1},c_i) - q(c_{i-1},c_i,c_{i+1})  \\
    && - q(c_i,c_{i+1},c_{i+2}).
\end{eqnarray*}

Then, the optimal tour pair has
\begin{align*}
	z(\tau^*,\rho^*) = q(C) + \min_{i:c_i\in C} \left\{ \phi(c_i) + \min_{\text{feasible }\rho} \sum_{u_j\in C} \beta_{u_j}(\rho_{u_j}) \right\}.
\end{align*}
\
Suppose that we fix an edge $c_i$ in $\tau$.  Then $H$ can be considered to be a fan as shown in Figure~\ref{fig:wheelfan}.

\begin{figure}[ht]
	\centering
	\includegraphics[page=4,clip,trim= 185 355 190 285]{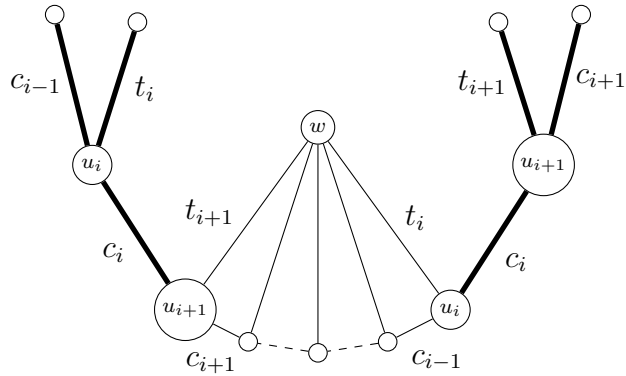}
	\caption{Wheel $H$ with centre $w$ considered as a fan.  Note that edge $c_i$ is fixed and the fan `wraps' around the wheel, reusing edges $c_{i-1},c_i,c_{i+1},t_{i-1},t_i$ and $t_{i+1}$.}
	\label{fig:wheelfan}
\end{figure}

Fix edge $c_r$ in $\tau$ and consider $H$ to be a fan $F_{c_r}$.  Then the minimum tour in $H$ can be determined by calculating the minimum of the minimum centre-traversal of $F_{c_r}$ and the tour which bypasses $c_r$ (using, say, fan $F_{c_1}$).  This can be computed in $O(r)$-time using the pseudofan technique described above.

The preceeding discussion yields the following algorithm.

\begin{algorithm}[H]
\caption{HalinTSP(3)($H$, $q$, $\beta$)}
\begin{algorithmic}
\INPUT Halin graph $H$, quadratic cost function $q$, and penalty function $\beta$
\IF{$H$ is a wheel}
\STATE Use the wheel procedure to find an optimal tour $\tau$ in $H$
\ELSE
\STATE Let $F$ be a fan in $H$
\STATE Contract $F$ to a single node $v_F$, using the Case 1 procedure.  That is, assign penalties $\beta$ to $v_F$, assign costs of $0$ to all triples in $H/F$ and assign costs of $q$ to all remaining triples which are $3$-neighbours in $H$.
\STATE HalinTSP(3)($H/F$, $q$, $\beta$)
\ENDIF
\STATE Expand all pseudonodes in reverse order and update $\tau$
\RETURN $\tau$
\end{algorithmic}
\end{algorithm}

Each time that a fan is contracted, the number of tree nodes is reduced by 1, and hence the fan contraction operation is performed one less than the number of non-leaf nodes in $H$.  The fan contraction operation can be performed in $O(|F|)$-time, and each time it is performed, the number of nodes in $H$ is reduced by $|F|-1$.  Since the wheel procedure takes $O(n)$-time, the total time for the algorithm is $O(n)$.

\subsection{TSP($k$)}
We now show that the previous ideas can be extended to solve TSP($k$).  For any subgraph $G$ of $H$, let $P_k(G)$ be the collection of all distinct candidate $k$-paths in $G$.   For each candidate $k$-path $(e_1,e_2,\ldots\,e_k)\in P_k(H)$, define
\begin{align} \label{eq:tspk}
\begin{split}
q(e_1,e_2,\ldots,e_k) =& q(e_1,e_k) + \frac{q(e_1,e_{k-1}) + q(e_2,e_{k})}{2} + \frac{q(e_1,e_{k-2}) + q(e_2,e_{k-1}) + q(e_3,e_k)}{3}\\&+ \ldots + \frac{c(e_1) + c(e_2) + \ldots + c(e_k)}{k}.
\end{split}
\end{align}

Now consider the simplified problem:
\begin{eqnarray*}
STSP(k): &\textrm{Minimize }   &  \sum_{(e_1,e_2,\ldots,e_k) \in P_k(\tau)}q(e_1,e_2,\ldots,e_k) \\
&\textrm{Subject to } &  \tau \in \mathcal{F}.
\end{eqnarray*}

\begin{theorem}
    Any optimal solution to the STSP(k) is also optimal for TSP(k).
\end{theorem}

\begin{proof}
    Using equation~(\ref{eq:tspk}), the proof of this follows along the same way as that of Theorem~\ref{thm:stsp} and hence is omitted.\qed
\end{proof}
As a result of Corollary~\ref{cor:kpaths}, the preceding algorithm can be extended to solve TSP($k$) by extending the penalty functions at outer nodes to accommodate subpaths of length $2^{\lceil (k+1)/2 \rceil}$.

A complete description of these varies as the information that needs to be stored is more involved and is hence omitted.  Some details however will be available in~\cite{Woods2017}.  The complexity increases by a factor of $(2^{\lceil (k+1)/2 \rceil})$, which is constant for fixed $k$ and polynomially bounded when $k=t\log n$.

%%%%%%%%%%% section on fully reducible graphs %%%%%%%%%%%%%%
\section{TSP(k) on fully reducible graph classes}\label{ReducibleSec}

We say that a class $\mathcal{C}$ of 3-connected graphs is \textit{fully reducible} if it satisfies the following:
\begin{enumerate}
\item If $G\in \mathcal{C}$ has a 3-edge cutset which partitions $G$ into components $S$ and $\bar{S}$, then both $G/S$ and $G/\bar{S}$ are in $\mathcal{C}$ and we call $G$ a \emph{reducible} graph in $\mathcal{C}$; and
\item TSP can be solved in polynomial time for the graphs in $\mathcal{C}$ that do not have non-trivial 3-edge cutsets, i.e.~3-edge cutsets that leave both components Hamiltonian. We call such graphs \emph{irreducible}.
\end{enumerate}

For instance, Halin graphs can be understood as graphs built up from irreducible fans connected to the remainder of the graph via $3$-edge cutsets.  Cornuejols et al.~\cite{Cornuejols:1985} show that the ability to solve TSP in polynomial time on irreducible graphs in $\mathcal{C}$ allows to solve TSP in polynomial time on all of $\mathcal{C}$ using facts about the TSP polyhedron.

We remark that the algorithm of Section $4$ can be used to show a somewhat similar statement for TSP($k$).  Here we consider a graph class $\mathcal{C}$ that is \emph{fully $k$-reducible} in the sense that either it can be subdivided into irreducible graphs via $3$-edge cutsets, or it is irreducible and it is possible to solve the $k$-neighbour Hamiltonian path problem in polynomial time.

This requires solving the following problem:
\begin{eqnarray*}
MTSP(k): &\textrm{Minimize }   &   \sum_{(e_1,\ldots,e_k) \in P_k(\tau)}q(e_1,\ldots,e_k) + \sum_{i\in V}\mathcal{P}_i(\tau) \\
&\textrm{Subject to } &  \tau \in \mathcal{F}
\end{eqnarray*}
where $\mathcal{P}_i(\tau)$ is a penalty function for the pseudonode which depends on how tour $\tau$ traverses $i$, analogous to the construction for the $3$-neighbour TSP of section~\ref{ksec}.

We recursively perform the contraction operation on the irreducible subgraphs of $G$, storing the necessary tour information in the penalty at the resulting pseudonode.  A similar result to Corollary~\ref{cor:kpaths} may be derived to show that for any fixed $k$, this requires a polynomial number of penalties.  The least cost traversals of $S$ can be computed in polynomial time using a generalization of the pseudo-fan strategy above.

Suppose the contraction operation is performed on a subgraph of size $r$ in time $O(P(r))$, where $P(r)$ is a polynomial in $r$.  Each time this operation is performed, the number of nodes in the graph is reduced by $r$.  This operation is performed at most $n$ times and it follows that the entire algorithm can be performed in polynomial time.

\section{Conclusions}
In this paper, we have shown that QTSP is NP-hard even when the costs are restricted to taking 0-1 values on Halin graphs.  We have presented a polynomial time algorithm to solve a restriction of QTSP, denoted TSP($k$) on any fully $k$-reducible graph for any fixed $k$.  To illustrate this, we have given an algorithm which solves TSP(3) on a Halin graph in $O(n)$ time.

The $k$-neighbour bottleneck TSP on a Halin graph can be solved by solving $O(\log(n))$ problems of the type TSP($k$).  However, it is possible to solve the problem faster.  Details will be reported elsewhere.

We would also like to thank Ante \'{C}usti\'{c} for his useful comments.

%\section{References}
%\nocite{Woods2010,Rostrami2016}
\bibliographystyle{plain}

\end{document}